\newcommand{\Ac}{\mathcal{A}}
\newcommand{\Bc}{\mathcal{B}}
\newcommand{\Cc}{\mathcal{C}}
\newcommand{\Dc}{\mathcal{D}}
\newcommand{\Ic}{\mathcal{I}}
\newcommand{\Jc}{\mathcal{J}}
\renewcommand{\Mc}{\mathcal{M}}
\newcommand{\Oc}{\mathcal{O}}
\newcommand{\Vc}{\mathcal{V}} %added
\newcommand{\FF}{\mathbb{F}}
\newcommand{\QQ}{\mathbb{Q}}
\newcommand{\ZZ}{\mathbb{Z}}
\newcommand{\qf}{\mathfrak{q}}
\newcommand{\Qf}{\mathfrak{Q}}
\def\MBZ{\mathbb{Z}}
\def\MBQ{\mathbb{Q}}
\def\Z4{\MBZ/4\MBZ}
\def\MCO{\mathcal{O}}
\def\LA{\Lambda}
\def\sigmabar{\overline{\sigma}}
\def\ubar{\overline{u}}
\def\Kibar{\overline{K}^{(i)}}
\def\K1bar{\overline{K}^{(1)}}
\def\Krbar{\overline{K}^{(g)}}
\def\Kbar{\overline{K}}
\def\Fbar{\overline{F}}
\newtheorem{lemma}{Lemma}
\newtheorem{proposition}{Proposition}
\newtheorem{definition}{Definition}
\theoremstyle{definition}
\newtheorem{example}{Example}
\newtheorem{remark}{Remark}
\begin{document}
\title {Quotients of Orders in Cyclic Algebras and Space-Time Codes}% {Quotients of {Orders in} Cyclic Algebras \\ and Applications to Space-Time Coding}
\author{F. Oggier}
\address{Division of Mathematical Sciences, School of Physical and Mathematical Sciences, \newline\indent Nanyang Technological University, Singapore.} 
\email {frederique@ntu.edu.sg}

\author{B. A. Sethuraman}
\address{Department of Mathematics, California State University Northridge,  \newline \indent
Northridge, CA 91330, USA.}
\email{al.sethuraman@csun.edu}

\keywords{coset codes, wiretap codes, space-time codes, division algebras, orders}
\subjclass{Primary 11S45;  % Algebraic number theory, algebras and orders 
 Secondary 11T71, % algebraic coding theory
94B40} % arithmetic codes}

\thanks {F. Oggier was partially supported by the Singapore National Research Foundation under 
Research grant NRF-CRP2-2007-03.  B.A. Sethuraman was partially supported by the National Science Foundation under Research grant DMS-0700904.  
  This paper had its genesis in some preliminary results that were obtained when both authors were visiting the Indian Institute of Technology Bombay, and both authors are grateful to the mathematics department there for its warm hospitality.}%\begin{document}
\begin{abstract}
Let $F$ be a number field with ring of integers $\Oc_F$ and $\Dc$ a division $F$-algebra with a maximal cyclic subfield $K$. We study rings occurring as quotients of a natural $\Oc_F$-order $\Lambda$ in $\Dc$ by  two-sided ideals. We reduce the problem to studying the ideal structure of $\Lambda/\qf^s\Lambda$,  where $\qf$ is a prime ideal in $\Oc_F$, $s\geq 1$. We study the case where $\qf$ remains unramified in $K$, both when $s=1$ and $s>1$. This work is motivated by its applications to space-time coded modulation.
\end{abstract}
\maketitle

%***************************************************************************%
%
% INTRO
%
%***************************************************************************%
\section{Introduction} \label{sec:intro}

Space-time coding is a wireless coding technique to ensure reliability when multiple antennas are in use, both at the transmitter and receiver ends of a communication channel.  
A space-time code typically consists of a family $\Cc$ of $n\times n$ complex matrices, where $n$ is the number of transmit antennas. The two main design criteria are 
the {\em rank criterion}, asking that
\[
\det(X-X')\neq 0,~X\neq X' \in \Cc,
\]
and the {\em minimum determinant criterion}, which further requires to maximize
\[
\min_{X\neq X' \in \Cc} |\det(X-X')|^2.
\]
Cyclic division algebras are by now a well known tool to design space-time codes \cite{Sethuraman,survey}. 
A cyclic algebra is embedded into a ring of $n\times n$ matrices by right multiplication, and the resulting codebook $\Cc$ has the property that $X-X'\in\Cc$, which simplifies the rank criterion to
\begin{equation}\label{eq:div}
\det(X)\neq 0,~0\neq X \in \Cc,
\end{equation}
and the minimum determinant criterion to maximizing
\begin{equation}\label{eq:det}
\min_{0\neq X \in \Cc} |\det(X)|^2.
\end{equation}
Furthermore, a cyclic algebra which is division satisfies the property that any nonzero element is invertible, which immediately fulfills the rank criterion (\ref{eq:div}).

A variation of this problem is that of space-time coded modulation for slow multiple 
antenna fading channels, where we assume that the channel is constant over $nL$ channels uses, namely, the code $\Cc$ now contains $n\times nL$ codewords of the form $X=(X_1,\ldots,X_L)$, where every $X_i$ is an $n\times n$ complex matrix.
The rank criterion then requires 
%\[
$\det(XX^*)=\det(X_1X_1^*+\ldots+X_LX_L^*) \neq 0$ 
%\]
when $X\neq 0$, which can be achieved by using $L$ $n\times n$ independent space-time block codes satisfying (\ref{eq:div}). However, a better performance regarding the second criterion,  {that is} maximizing 
\begin{equation}\label{eq:mindet}
\Delta_{min}=\min_{0\neq X \in \Cc} |\det(XX^*)|
\end{equation}
can be obtained using coset coding \cite{Luzzi,pstm}. 

Coset codes are constructed using the following procedure:  One starts with a standard space-time code, which is an order $\LA$ inside a cyclic division algebra over a number field.  This is known as the inner code. One takes the quotient of this order by a two-sided ideal $\Jc$ and obtains a finite ring $\LA/\Jc$, over which one constructs a code $\bar{\Cc}$ of length $L$. This code is referred to as the outer code, and it is an additive subgroup of $\oplus_{i=1}^{L}\LA/\Jc$. Writing $\pi$ for the induced map $\oplus_{i=1}^{L}\LA \rightarrow \oplus_{i=1}^{L}\LA/\Jc$, one then chooses as the final code $\Cc$ an additive subgroup of $\oplus_{i=1}^{L}\LA$ such that $\pi(\Cc) = \bar{\Cc}$. This is hence of length $L$, and every of the $L$ coefficients is contained in the inner code.  The challenge in this procedure is to design $\bar{\Cc}$ so as to maximize $\Delta_{min}$.  

Previous works \cite{Luzzi,pstm} focus on coset codes obtained by choosing an order corresponding to one of the perfect codes \cite{perfect} of size $2$, $3$, or $4$, and selecting specific two-sided ideals  that yield as quotients the matrix rings $\Mc_m(S)$, where $m$ is respectively 2, 3 or 4, and $S$ is either a finite field or a finite extension ring of a finite field.

\subsection{Contribution and Organization} \label{subsecn:contrib}

Our goal is to study coset coding in cyclic division algebras in greater generality than considered previously. We will focus in this paper on describing various families of rings that can occur as quotients of orders in division algebras over number fields. We  present some new coset encoding schemes suitable 
for space-time coded modulation. The known results of \cite{Luzzi,pstm} are particular cases of our results. 

We start in Section \ref{sec:cyc} by recalling how space-time codes are obtained from a division algebra $\Dc$ with center a number field $F$ and with a given maximal cyclic subfield $K$. 
In this same section we introduce the natural order $\Lambda$ of $\Dc$ that will be the focus of this paper, and we describe the problem of designing coset codes using $\Lambda$.  We establish a key inequality concerning  the determinant of a sum of positive-definite matrices that will be needed for coding gain estimates.
In Section \ref{sec:prel} we prove an easy but fundamental lemma that describes the  quotients of $\Lambda$ by a two-sided ideal as direct sums of (generalized) cyclic algebras over finite rings. This lemma shows that it is enough to study the case where this two-sided ideal is of the form  $\qf^s\Lambda$ where $\qf$ is a prime ideal of the ring  of integers $\Oc_F$ of $F$, $s\ge 1$.  In the remaining sections, we study the various cases where $\qf$ remains unramified in $K$.
 We show that in several of these cases, these generalized cyclic algebras over finite rings can  alternatively be described in terms of better known rings, and we describe the ideals of these algebras in such cases. We give examples of codes to illustrate the different quotient rings we obtain.
Finally, in Section \ref{sec:wiretap} we summarize the implications of our results for space-time coded modulation, and describe another potential application to wire-tap coding.   

The results of this paper identify a large class of rings as possible quotient rings of $\LA$, and thus provide a framework for coset coding in division algebras.
%***************************************************************************%
%
% COSET ENCODING
%
%***************************************************************************%
\section{Coset Coding and Cyclic Algebras}
\label{sec:cyc}

Let $K/F$ be a cyclic number field extension of degree $n$, with cyclic Galois group $G=\langle\sigma 
\rangle$. Having coding applications in mind, we will typically be interested in having $F = \MBQ(\imath)$, $\imath^2=-1$, $F = \MBQ(\omega_3)$, $\omega_3^2+\omega_3+1=0$, or $F = \MBQ$, although we will not make any such restrictions in our results.
\begin{definition}
A cyclic algebra $\Ac=(K/F,\sigma,u)$ is a vector space 
\[
K\oplus Kz \oplus \cdots \oplus Kz^{n-1}
\]
with multiplication defined by $zk = \sigma(k)z$ for all $k\in K$, and $z^n = u\in F$, $u\neq 0$.
\end{definition}
A cyclic algebra is said to be a cyclic {\em division} algebra if every nonzero element is invertible, in which case we will use the notation $\Dc = (K/F, \sigma, u)$.

A space-time codebook is obtained via the standard $F$-embedding of $\Dc$ into $\Mc_n(K)$ arising from representing the action of $\Dc$ by right multiplication on $\Dc$ viewed as a left $K$-space, in the basis $\{1,z,\dots,z^{n-1}\}$. For $x_i=x_{i,0}+x_{i,1}z+\ldots+x_{i,n-1}z^{n-1}\in \Dc$, write $M(x_i)$ for its image: 
\begin{equation}\label{eq:mat} 
M(x_i)= 
\begin{bmatrix}
x_{i,0}    & u \sigma(x_{i,n-1}) & u \sigma^2(x_{i,n-2}) &\dots   & u \sigma^{n-1}(x_{i,1}) \\
x_{i,1}    & \sigma(x_{i,0})     & u \sigma^2( x_{i,n-1}) & \dots  & u \sigma^{n-1}(x_{i,2}) \\
\vdots & \vdots&  \vdots &\ddots &\vdots\\
\vdots      &  \vdots          &    \vdots         &  &  u \sigma^{n-1}(x_{i,n-1})      \\
x_{i,n-1}  & \sigma(x_{i,n-2})   & \sigma^2(x_{i,n-3})   &   & \sigma^{n-1}(x_{i,_0}) \\
\end{bmatrix}.
\end{equation} 
Let $\Oc_F$ and $\Oc_K$ be the ring of integers of $F$, respectively $K$. The following will be fundamental to us:

\par  \textit{We  will assume throughout the paper that $u \in \MCO_F$} (instead of merely taking $u$ to be in $F$). 

Recall (see, e.g., \cite[Chap. 2, \S 8]{Reiner}) that for $R$ a Noetherian integral domain with quotient field $F$, and $\Ac$ a
finite dimensional $F$-algebra, we have the following definition. 

\begin{definition}\label{def:order}
An $R$-{\em order} in the $F$-algebra $\Ac$ is a subring $\Lambda$ of $\Ac$,
having the same identity element as $\Ac$, and such that $\Lambda$ is
a finitely generated module over $R$ and generates $\Ac$ as a linear space
over $F$. An order $\Lambda$ is called {\em maximal} if it is not properly
contained in any other $R$-order.
\end{definition} 

We consider the subring $\LA = (\MCO_K/\MCO_F, \sigma, u)$ of the cyclic division $F$-algebra $\Dc= (K/F, \sigma, u)$:
\[
\LA=\MCO_K \oplus \MCO_K z \oplus \ldots \oplus \MCO_K z^{n-1}
\]
(recall the assumption that $u\in \MCO_F$).  This is an $\MCO_F$-order in $\Dc$, and we will refer to this order in the paper as  {\em the natural order}. (Of course, choosing a different cyclic maximal subfield $K'$ will yield a different representation of $\Dc$ as $(K'/F,\sigma,u')$, and will lead to a different order. Thus, we assume at the outset that the representation $\Dc = (K/F, \sigma, u)$ is fixed and define the natural order with respect to this representation.)  We note that $\Lambda$ is in general not a maximal order, but this will not concern us in the paper.

Let $\Jc$ be a two-sided ideal of $\Lambda$. 
As described in Section \ref{sec:intro}, coset coding consists of designing a code $\bar{\Cc}$ over the quotient ring $\Lambda/\Jc$. The space-time codebook $\Cc$ is then obtained as a lift of $\bar{\Cc}$.
In order to perform coset coding, we therefore first need  to determine  the structure of $\Lambda/\Jc$ when $\Lambda=\oplus_{i=0}^{n-1}\Oc_K z^i$ and $\Jc$ is a two-sided ideal of $\Lambda$.

The code design criterion for $\bar{\Cc}$ is not obvious. The code $\bar{\Cc}$ must be designed so that $\Delta_{min}$ is maximized for $\Cc$. We first establish a critical  inequality that we will need  in (\ref{eqn:det_stuff}) below.  This inequality is proved in \cite[Lemma 1]{Luzzi} for the special case of  $2\times 2$ matrices coming from the Golden Code, but as we see below, it holds more generally for $k$-tuples of $n\times n$ positive-definite matrices.  It is presumably well-known, but we provide a proof for convenience. Note that any positive definite (necessarily Hermitian) matrix can be written as $XX^*$ for some  invertible matrix $X$, not necessarily unique, and note that the sum of two positive definite matrices as well as the inverse of a positive definite matrix are also positive definite.

\begin{lemma} \label{det_inequal} Let $Y_1$, $\dots$, $Y_k$ be $n\times n$ positive definite  matrices, and write $Y_1 + \cdots + Y_k$ as $S S^* $ for a suitable invertible matrix $S$.  If each $Y_i$ equals $S_iS_i^* $ for suitable invertible matrices $S_i$, then $|\det(S)| \ge  \sum_{i=1}^k |\det(S_i)|$.
In particular, if $X_1$, $\dots$, $X_k$ are $n\times n$ invertible matrices with complex entries, then $|\det(\sum_{i=1}^n X_i X_i^*)| \ge \left(\sum_{i=1}^k |\det(X_i)|\right)^2$.
\end{lemma}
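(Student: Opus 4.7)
The plan is to derive the inequality from the classical Minkowski determinant inequality, followed by a one-line convexity/superadditivity fact for real numbers. Writing $Y_i = S_i S_i^*$ and $\sum_i Y_i = SS^*$, note $\det(Y_i) = |\det(S_i)|^2$ and $\det(\sum_i Y_i) = |\det(S)|^2$, both non-negative reals, so the desired inequality $|\det(S)| \ge \sum_i |\det(S_i)|$ is equivalent to an inequality purely about determinants of the positive definite matrices $Y_i$ and their sum.

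First I would invoke the Minkowski determinant inequality for positive semidefinite $n \times n$ matrices, which gives
\[
\det(Y_1 + \cdots + Y_k)^{1/n} \;\ge\; \sum_{i=1}^k \det(Y_i)^{1/n}.
\]
In our notation this reads $|\det(S)|^{2/n} \ge \sum_i |\det(S_i)|^{2/n}$. Raising both sides to the power $n/2$ gives
\[
|\det(S)| \;\ge\; \left(\sum_{i=1}^k |\det(S_i)|^{2/n}\right)^{n/2}.
\]

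Next I would apply the elementary fact that for non-negative reals $a_1,\dots,a_k$ and exponent $p \ge 1$, one has $(\sum_i a_i)^p \ge \sum_i a_i^p$. (Proof: if $T = \sum_j a_j$, then $a_i/T \le 1$ so $(a_i/T)^p \le a_i/T$; summing over $i$ gives $\sum_i a_i^p \le T^p$.) Applying this with $p = n/2$ (which is $\ge 1$, since space-time coding forces $n \ge 2$) and $a_i = |\det(S_i)|^{2/n}$ yields
\[
\left(\sum_{i=1}^k |\det(S_i)|^{2/n}\right)^{n/2} \;\ge\; \sum_{i=1}^k |\det(S_i)|,
\]
and chaining the two inequalities gives the first claim. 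The second claim then follows by specializing to $Y_i = X_i X_i^*$ and squaring: $|\det(\sum_i X_i X_i^*)| = |\det(S)|^2 \ge (\sum_i |\det(X_i)|)^2$.

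The only substantive step is the Minkowski determinant inequality; everything else is bookkeeping and a convexity inequality. If the writeup needs to be self-contained, I would record a quick proof of Minkowski: diagonalize $Y_1$ by a congruence $Y_1 = UU^*$ with $U$ invertible so that $U^{-1}Y_iU^{-*}$ remains positive semidefinite, reducing to the case $Y_1 = I$ where the inequality becomes $\det(I + A)^{1/n} \ge 1 + \det(A)^{1/n}$ for PSD $A$, which is immediate from the AM–GM inequality applied to the eigenvalues of $A$. This is the main technical ingredient; the remainder is arithmetic.
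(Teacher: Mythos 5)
Your proof is correct and is essentially the alternative route the paper itself acknowledges in the parenthetical remark following the lemma (crediting Chi-Kwong Li): derive the statement from Minkowski's determinant inequality, \cite[Theorem 7.8.8]{HJ}. The paper's own argument is self-contained: it reduces by induction to $k=2$, writes $\det(Y_1+Y_2)=\det(Y_1)\det(I_n+Y_1^{-1}Y_2)$, diagonalizes $C=Y_1^{-1}Y_2$ (positive eigenvalues $l_1,\dots,l_n$), and proves $(1+l_1)\cdots(1+l_n)\ge \bigl(1+\sqrt{l_1\cdots l_n}\bigr)^2$ by AM--GM, i.e.\ it establishes superadditivity of $\det^{1/2}$ directly. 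You instead quote superadditivity of $\det^{1/n}$ (Minkowski) and upgrade it with the elementary fact $\left(\sum_i a_i\right)^p\ge\sum_i a_i^p$ for $p=n/2\ge 1$; your sketch of Minkowski (reduce to $Y_1=I$ by congruence, then AM--GM on eigenvalues) is essentially the same computation the paper performs, so the mathematical content coincides, and what your version buys is brevity modulo a citation, while the paper's buys self-containedness. One point worth making explicit: your step needs $n/2\ge 1$, and indeed the lemma as stated is false for $n=1$ (take $Y_1=Y_2=1$: $\sqrt2<2$); the paper's final AM--GM step likewise silently uses $n\ge 2$, so this is a shared caveat rather than a gap in your argument, and you correctly flag that the coding context forces $n\ge 2$. (Minor typo in the statement, not yours: the second claim should read $\sum_{i=1}^k X_iX_i^*$, and $\det\bigl(\sum_i X_iX_i^*\bigr)$ is automatically a nonnegative real, so the absolute value there is harmless.)
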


\begin{proof} By induction, it is sufficient to prove this in the case $k=2$. Write $Y$ for $Y_1+ Y_2$.  Since $|\det(S)|^2 = \det(Y)$, and similarly for $|\det(S_i)|$, we need to show that $(\det(Y))^{1/2} \ge (\det(Y_1))^{1/2}  + (\det(Y_2))^{1/2} $. Writing the left side as $(\det(Y_1))^{1/2} (\det(I_n + Y_1^{-1}Y_2))^{1/2}$, where $I_n$ is the identity $n\times n$ matrix, it is sufficient to show that $(\det(I_n + Y_1^{-1}Y_2))^{1/2} \ge 1 + (\det(Y_1^{-1}Y_2))^{1/2}$.  Write $C$ for $Y_1^{-1}Y_2$.  By \cite[Theorem 7.6.3]{HJ}, $C$ is diagonalizable and has all positive eigenvalues (since $Y_2$ has all positive eigenvalues). Therefore, there exists some invertible matrix $P$ such that $C = PDP^{-1}$, where $D$ is diagonal, with positive entries $l_1$, $\dots$, $l_n$.  Then $(\det(I_n + C))^{1/2} = (\det(P(I_n+D)P^{-1}))^{1/2} = (\det(I_n + D))^{1/2} = \sqrt{(1+l_1)\cdots(1+l_n)}$.
 We thus need to show that $\sqrt{(1+l_1)\cdots(1+l_n)} \ge 1 + \sqrt{l_1\cdots l_n}$, or what is the same thing, ${(1+l_1)\cdots(1+l_n)} \ge \left(1 + \sqrt{l_1\cdots l_n}\right)^2$.  The left side contains as summands $1$, $(l_1\cdots l_n)$, $l_1$, $(l_2\cdots l_n)$, and (for $n\ge 3$) other positive quantities. The right side equals $1 + (l_1\cdots l_n) + 2\sqrt{l_1\cdots l_n}$.  By the arithmetic-geometric mean inequality, $l_1 + (l_2\cdots l_n) \ge 2\sqrt{l_1\cdots l_n}$, which proves the inequality.

The last statement follows by taking $Y_i = X_i X_i^*$, $S_i = X_i$, and noting that the $Y_i$ are positive definite as the $X_i$ are invertible (see \cite[Theorem 7.1.6]{HJ} for instance).
\end{proof}

(We thank Chi-Kwong Li for pointing out that this inequality can also be derived from 
%the stronger result known as 
Minkowski's inequality,
%: $\left(\det(Y_1 + Y_2)\right)^{1/n}  \ge \left(\det(Y_1)\right)^{1/n} + \left(\det(Y_2)\right)^{1/n}$, valid for $n\times n$ positive definite matrices $Y_1$ and $Y_2$, 
\cite[Theorem 7.8.8]{HJ}. Indeed our proof above is similar to the proof of Minkowski's inequality.)

Recalling from (\ref{eq:mindet}) the definition of $\Delta_{min}$, we observe that
\begin{eqnarray} %\label{eqn:det_stuff}
\Delta_{min}&=&\min_{0\neq X \in \Cc} |\det(XX^*)| \notag\\
            &=& \min_{0\neq X \in \Cc} |\det(X_1X_1^*+\ldots X_LX_L^*)| \notag\\
            &\geq & \min_{0\neq X \in \Cc} \left( \sum_{i=1}^L |\det(X_i)|  \right)^2, \label{eqn:det_stuff}
\end{eqnarray} 
where the last inequality comes from  Lemma \ref{det_inequal} above. Note that this lower bound also proves that chosing $X_1,\ldots,X_L$ independently will make sure that $\det(XX^*)\neq 0$, as claimed in the introduction.
Now in this paper $X_i=M(x_i)$, where $M(x_i)$ is defined in (\ref{eq:mat}). Denote by $\bar{x}_i$ the image of $x_i$ in $\Lambda/\Jc$, and 
by $\bar{x}$ the codeword $(\bar{x}_1,\ldots,\bar{x}_L)\in \bar{\Cc}$.  
If $\bar{x}\neq 0$ (so some $\bar{x_i} \neq 0$, i.e., some $x_i \not\in\Jc$), then the number of nonzero components of $\bar{x}$ is at least the Hamming distance $d_H(\bar{\Cc})$ of $\bar{\Cc}$, and 
\[
\Delta_{min}\geq  d_H(\bar{\Cc})^2 \min_{{x_i}\not\in\Jc} |\det(X_i)|^2.
\]   
Now if $\bar{x}=0$, then every $x_i$ is {either a non-zero element of $\Jc$ or $0$}, and 
\[
\Delta_{min}\geq  \min_{ 0 \neq x_i \in \Jc} |\det(X_i)|^2.
\]
%\sout{which is achieved when $X=(M(x_1),0,\ldots,0)$, $x_1\in\Jc$.} \marginpar{Why do we need the struck out line?\textcolor{red}{so you can remove/change it, but the point is, if we could ensure that all the $x_i$ are all nonzero, then we could just take all the $x_i$ in $\Jc$ and get $L$ times the min det in $\Jc$, but the point is, we cannot disthinguish in the quotient those 0 coming from 0, and those 0 coming from smthg non-zero in $\Jc$. }} 
Thus 
\begin{equation}\label{eq:dhbound}
\Delta_{min} \geq \min\left( d_H(\bar{\Cc})^2 \min_{{x_i}\not\in\Jc} |\det(X_i)|^2, \min_{0\neq x_i \in \Jc} |\det(X_i)|^2\right). 
\end{equation} 
To further analyze this last expression, we need to understand $\Lambda/\Jc$ well. We will prove a structure theorem for $\Lambda/\Jc$  below and analyze the structure in greater details in different cases. Particular emphasis will be given in the examples to quotient rings of characteristic 2, more suitable for coding applications.

%***************************************************************************%
%
% 
%
%***************************************************************************%
\section{Quotients of The Natural Order: General Structure} 
\label{sec:prel}

We prove in this section an elementary but fundamental lemma that describes the quotients of  $\Lambda=\oplus_{i=0}^{n-1}\Oc_K z^i$ in terms of generalized cyclic algebras over finite rings.

We start with some preliminary results on ideals of $\Lambda$.

\begin{lemma} \label{ring_is_integral} The elements of $\LA$ are integral over $\MCO_F$.
\end{lemma}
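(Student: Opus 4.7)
The plan is to invoke the standard determinant/finitely generated module characterization of integrality, applied to the subring $\MCO_F[\alpha]$ generated by an arbitrary $\alpha\in\LA$. Since $\MCO_F$ sits in the center of $\Dc$, this subring is commutative, so the usual commutative-algebra criterion applies verbatim even though $\LA$ itself is noncommutative.

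First I would observe that $\LA$ is a finitely generated $\MCO_F$-module. This is because $\MCO_K$ is a finitely generated $\MCO_F$-module (a basic fact about rings of integers in number field extensions) and $\LA=\bigoplus_{i=0}^{n-1}\MCO_K z^i$, so a set of $\MCO_F$-module generators for $\MCO_K$, multiplied by $1,z,\dots,z^{n-1}$, generates $\LA$ over $\MCO_F$.

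Next, for an arbitrary $\alpha\in\LA$, consider the subring $\MCO_F[\alpha]\subseteq\LA$. Since $\MCO_F$ is Noetherian (it is a Dedekind domain) and $\LA$ is finitely generated as an $\MCO_F$-module, every $\MCO_F$-submodule of $\LA$ is finitely generated; in particular $\MCO_F[\alpha]$ is a finitely generated $\MCO_F$-module. Because $\MCO_F[\alpha]$ contains $1$ and is closed under multiplication by $\alpha$, the standard determinantal trick (expressing $\alpha$ times a chosen set of generators as an $\MCO_F$-linear combination of those generators and taking the characteristic polynomial of the coefficient matrix) yields a monic polynomial in $\MCO_F[x]$ annihilating $\alpha$. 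Hence $\alpha$ is integral over $\MCO_F$.

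There is essentially no obstacle; the only point worth being explicit about is that $\MCO_F$ is central in $\Dc$, so $\MCO_F[\alpha]$ really is a commutative ring and the classical criterion applies without modification.
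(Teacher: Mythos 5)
Your proof is correct, but it takes a different route from the paper. The paper argues via the explicit representation: it maps $l\in\LA$ to the matrix $M(l)$ of (\ref{eq:mat}), invokes Cayley--Hamilton, and notes that the characteristic polynomial is monic with coefficients lying both in $F$ (because $M$ is an $F$-embedding, so the characteristic polynomial is defined over $F$) and in $\MCO_K$ (because the entries of $M(l)$ lie in $\MCO_K$, using $u\in\MCO_F$), hence in $\MCO_F$; since $l$ satisfies this polynomial, it is integral. You instead use the module-theoretic characterization: $\MCO_K$ is a finitely generated $\MCO_F$-module, hence so is $\LA=\oplus_{i=0}^{n-1}\MCO_K z^i$; by Noetherianity of $\MCO_F$ the commutative subring $\MCO_F[\alpha]$ (commutative because $\MCO_F$ is central) is a finitely generated faithful module containing $1$, and the determinant trick gives a monic annihilating polynomial over $\MCO_F$. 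Both arguments are complete. Yours is essentially the standard proof that every element of an $R$-order is integral over $R$ (it uses only Definition \ref{def:order} and would work for any $\MCO_F$-order, not just the natural one, and needs no facts about characteristic polynomials of cyclic-algebra representations); the paper's proof is more concrete, producing an explicit monic polynomial (the characteristic polynomial of $M(l)$) and staying close to the matrix representation that is used throughout for the coding applications, at the cost of implicitly invoking the fact that this characteristic polynomial has coefficients in $F$ together with the integral closedness of $\MCO_F$ in $F$.
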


\begin{proof} We consider the image $M(l)$ of $l\in \LA$ by the standard $F$-embedding of $\Dc$ into $\Mc_n(K)$, as described in (\ref{eq:mat}). Then $M(l)$ satisfies its own characteristic polynomial by Cayley-Hamilton {Theorem}, and since this is a polynomial over $F$ and the embedding is an $F$-embedding, $l$ also satisfies this polynomial.  But the entries of $M(l)$ are all in $\MCO_K$ (note that $u\in \MCO_F\subseteq \MCO_K$).  It follows that the characteristic polynomial (which is monic) has its coefficients in $\MCO_F$.  Thus, $l$ is integral over $\MCO_F$.
\end{proof}

Our next result, whose proof is easy, enables us to study ideals of $\LA$ (and hence quotients of $\LA$) in terms of ideals of $\MCO_F$.

\begin{lemma} \label{cap_nonzero}
Let $\Jc$ be a nonzero (two-sided) ideal of $\LA$. Then $\Jc \cap \Oc_F  \neq 0$.
\end{lemma}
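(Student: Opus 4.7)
The plan is to use Lemma \ref{ring_is_integral} (integrality over $\MCO_F$) together with the fact that $\LA \subseteq \Dc$ is a domain (since $\Dc$ is a division algebra) to manufacture a nonzero element of $\MCO_F$ inside $\Jc$.

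First I would pick any nonzero $l \in \Jc$. By Lemma \ref{ring_is_integral}, $l$ satisfies some monic polynomial with coefficients in $\MCO_F$; among all such polynomials, let $p(x) = x^m + a_{m-1} x^{m-1} + \cdots + a_1 x + a_0$ be one of minimal degree $m \ge 1$. The central claim is that the constant term $a_0$ is nonzero. Indeed, if $a_0 = 0$ then we could write $p(x) = x\, q(x)$ with $q(x) = x^{m-1} + a_{m-1} x^{m-2} + \cdots + a_1$ still monic over $\MCO_F$, and the relation $l \cdot q(l) = 0$ inside the division algebra $\Dc$, combined with $l \neq 0$, would force $q(l) = 0$. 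This contradicts the minimality of $m$, so $a_0 \neq 0$.

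Next, rewriting the relation $p(l) = 0$ as
\[
a_0 = -\bigl( l^m + a_{m-1} l^{m-1} + \cdots + a_1 l \bigr) = -l \cdot \bigl( l^{m-1} + a_{m-1} l^{m-2} + \cdots + a_1 \bigr),
\]
I would observe that the right-hand side lies in $\Jc$ because $l \in \Jc$ and $\Jc$ is a (two-sided) ideal of $\LA$. Hence $a_0 \in \Jc$, and since $a_0 \in \MCO_F$ and $a_0 \neq 0$, we conclude $\Jc \cap \MCO_F \neq 0$.

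There is no real obstacle here: the only subtlety is to remember that the division algebra hypothesis on $\Dc$ is what rules out the constant term vanishing. Minimality of the degree converts "$l$ is integral'' into "$l$ satisfies a polynomial with nonzero constant term,'' and the ideal property then delivers that constant term into $\Jc \cap \MCO_F$.
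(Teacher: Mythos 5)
Your proof is correct and follows essentially the same route as the paper: pick a nonzero element of $\Jc$, use Lemma \ref{ring_is_integral} to get a monic integral relation, use the fact that $\LA$ has no zero divisors to arrange a nonzero constant term (you do this via minimality of the degree, the paper by factoring out the highest power of the element at once), and then observe that the constant term lies in $\Jc \cap \MCO_F$. No gaps.
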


\begin{proof} Pick $j\in \Jc$, $j\neq 0$. Then $j$ is integral over $\MCO_F$ by Lemma \ref{ring_is_integral}, so $j^t + f_{t-1} j^{t-1} + \cdots + f_1j + f_0 =0$ for some $f_i\in \MCO_F$.  We may assume $f_0 \neq 0$ since $\LA$ is a domain, namely, if $f_0, \dots, f_{i-1}=0$ but $f_i \neq 0$, then we could factorize $j^t + f_{t-1} j^{t-1} + \cdots + f_ij^i=j^i(j^{t-i} +  {f_{t-1}j^{t-i-1}} + \cdots + f_i)$ and since 
$\Lambda$ has no zero divisors, this would give a new polynomial for $j$ with nonzero constant term. Then $f_0 = -(j^t + f_{t-1} j^{t-1} + \cdots + f_1j)$ is in $\Jc\cap\MCO_F$.
\end{proof}

This lemma gives us a strategy for identifying every nonzero two-sided ideal $\Jc$ of $\LA$ (and hence for identifying the quotient ring $\Lambda/\Jc$) by focusing first on nonzero ideals of the nicer ring $\Oc_F$.  The lemma shows that for every two-sided ideal $\Jc$ of $\Lambda$, the intersection
$
\Ic=\Jc\cap \MCO_F
$ is a nonzero ideal of $\Oc_F$. On the other hand, if $\Ic$ is any nonzero ideal of $\Oc_F$, it lies in the center of $\LA$, and generates the two-sided ideal $\Ic\LA $ of $\LA$ given by 
\begin{equation}  \label{eqn:ILA_desc}
\Ic\LA = \{i l \ | \ i\in \Ic, l\in \LA\} = \{i_0 + i_1 z + \cdots + i_{n-1}z^{n-1}, \ i_0, \dots, i_{n-1}\in\Ic\Oc_K \}.
\end{equation}
The second description above shows that $\Ic\LA \cap \MCO_F = \Ic$.  We thus have a \textit{surjective} (i.e., \textit{onto}) map $\Phi$ from the set of nonzero ideals of $\LA$ to the set of nonzero ideals of $\Oc_F$ that takes the ideal $\Jc$ of $\LA$ to $\Phi(\Jc)=\Ic=\Jc\cap\Oc_F$.  Note that if $\Phi(\Jc) = \Ic$, then $\Jc$ must contain the ideal $\Ic\LA$ generated by $\Ic$.
The map $\Phi$ is not 1-1, since many ideals of $\Lambda$ could intersect down to the same ideal of $\Oc_F$. However, the surjectivity of $\Phi$ shows that if we vary $\Ic$ in $\Oc_F$ and for each $\Ic$ if we consider all ideals $\Jc$ of $\LA$ such that $\Jc\cap\Oc_F = \Ic$, we would identify all ideals of $\LA$.  But we could be even looser: for each $\Ic$, we could consider all ideals $\Jc$ of $\LA$ that \textit{merely contain} $\Ic$.  These would of course include all ideals in $\Phi^{-1}(\Ic)$, but could contain more.   The advantage of this approach is the following: Notice that $\Jc \supseteq \Ic$ if and only if $\Jc\supseteq \Ic\LA$. Hence, we could appeal to standard homomorphism theorems for rings:
there is a one-to-one correspondence between ideals of $\Lambda$ that contain $\Ic\Lambda$ and ideals of the quotient $\Lambda/\Ic\Lambda$, with an ideal $\Jc \supseteq \Ic\Lambda$ of $\Lambda$ corresponding to the ideal $\Jc/\Ic\Lambda$ of $\Lambda/\Ic\Lambda$.  Therefore, to determine all ideals $\Jc$ of $\LA$ (and hence to determine all  quotient rings $\LA/\Jc$), it is sufficient to determine the ideal structure of quotient rings of the special form $\LA/\Ic\LA$, where $\Ic$ is a nonzero ideal of $\Oc_F$.

%We now address the question of determining the ideal structure of quotients of the form $\Lambda/\Ic\Lambda$, where $\Ic$ is a %nonzeroideal of $\Oc_F$, and \tcb{as above,} $\Lambda=\oplus_{i=0}^{n-1}\MCO_Kz^i$.

Notice from the discussion above that $\Oc_F/\Ic$ injects into $\LA/\Ic\LA$ (since $\Ic\LA\cap\Oc_F = \Ic$), and that 
%\[
%\Ic\LA =\oplus_{i=0}^{n-1} \Ic\MCO_Kz^i.
%\]
%Consequently
\begin{equation} \label{eq:direct_sum}
\LA/\Ic\LA \cong \oplus_{i=0}^{n-1} (\MCO_K/\Ic\MCO_K)z^i
\end{equation}
where $z(k+\Ic\MCO_K) = (\sigma(k)+\Ic\MCO_K)z$ for all {$k\in \Oc_K$} and $z^n = u+\Ic$.  
(Note that  $G = \langle\sigma\rangle$ acts trivially on $\Ic\subseteq \MCO_F$, and hence, 
$G$ has a natural action on $\MCO_K/{\Ic}\MCO_K $.) 

Using that $\MCO_F$ is a Dedekind domain, $\Ic$ can be factored as $\qf_1^{s_1}\cdots \qf_t^{s_t}$, where the $\qf_i$ are prime (and maximal) ideals of $\Oc_F$. Since the ideals $\qf_i^{s_i}$ are pairwise comaximal already as ideals of $\MCO_F$,  the Chinese Remainder theorem allows us to write $\MCO_F/\Ic$ and $\MCO_K/\Ic\MCO_K$ as ring direct sums:
\begin{eqnarray} \label{CRTIsoms1}
\MCO_F/\Ic &=& \MCO_F/\qf_1^{s_1}\cdots \qf_t^{s_t}  \cong \MCO_F/\qf_1^{s_1} \times \cdots \times \MCO_F/\qf_t^{s_t}\notag\\
\label{CRTIsoms2}
 \MCO_K/\Ic\MCO_K &=& \MCO_K/\qf_1^{s_1}\cdots \qf_t^{s_t}\MCO_K \cong \MCO_K/\qf_1^{s_1}\MCO_K \times \cdots \times \MCO_K/\qf_t^{s_t}\MCO_K.
  \end{eqnarray} 

In fact, since $G$ acts trivially on each $\qf_i$, we have an induced action of $G$ on each $\MCO_K/\qf_i^{s_i}\MCO_K$, and the second isomorphism above is an isomorphism of $G$-modules.

The following observation is elementary but key to understanding the possible quotients $\Lambda/\Jc$.  
\begin{lemma} \label{fund_lemma}
With $\Lambda$ and $\Ic$ as above, we have that
\[
\Lambda/\Ic\Lambda \cong \mathcal{R}_1 \times \cdots \times \mathcal{R}_t
%(\MCO_K/\qf_1^{s_1}\MCO_K)z^j \times \cdots \times \oplus_{j=0}^{n-1} (\MCO_K/\qf_t^{s_t}\MCO_K)z^j
\]
where $\mathcal{R}_i$  is the ring $\oplus_{j=0}^{n-1} (\MCO_K/\qf_i^{s_i}\MCO_K)z^j$ is subject to $z(k+ \qf_i^{s_i}\MCO_K) = (\sigma(k)+\qf_i^{s_i}\MCO_K) z$ and $z^n = u + \qf_i^{s_i}$.
\end{lemma}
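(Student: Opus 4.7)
The plan is to combine the direct sum decomposition established in equation (\ref{eq:direct_sum}) with the Chinese Remainder decomposition of $\mathcal{O}_K/\mathcal{I}\mathcal{O}_K$ already recorded in (\ref{CRTIsoms2}), and then verify that the $z$-twisted multiplication splits compatibly across the CRT factors. First I would write
\[
\Lambda/\mathcal{I}\Lambda \;\cong\; \bigoplus_{j=0}^{n-1}(\mathcal{O}_K/\mathcal{I}\mathcal{O}_K)\,z^j
\]
as an $\mathcal{O}_F/\mathcal{I}$-module, and then substitute the CRT isomorphism $\mathcal{O}_K/\mathcal{I}\mathcal{O}_K\cong\prod_{i=1}^t \mathcal{O}_K/\mathfrak{q}_i^{s_i}\mathcal{O}_K$ into each slot. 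This gives, purely as an additive group,
\[
\Lambda/\mathcal{I}\Lambda \;\cong\; \bigoplus_{j=0}^{n-1}\prod_{i=1}^{t}(\mathcal{O}_K/\mathfrak{q}_i^{s_i}\mathcal{O}_K)\,z^j \;\cong\; \prod_{i=1}^{t}\bigoplus_{j=0}^{n-1}(\mathcal{O}_K/\mathfrak{q}_i^{s_i}\mathcal{O}_K)\,z^j,
\]
where the last step interchanges the finite direct sum and the finite product. The right-hand side is exactly $\mathcal{R}_1\times\cdots\times\mathcal{R}_t$ as sets.

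Next I would check that this additive identification is a ring isomorphism. The only multiplicative data to track are the relations $zk=\sigma(k)z$ for $k\in \mathcal{O}_K/\mathcal{I}\mathcal{O}_K$ and $z^n=u+\mathcal{I}$. For the first, the key observation (already noted just before the lemma) is that the CRT isomorphism (\ref{CRTIsoms2}) is an isomorphism of $G$-modules: because each $\mathfrak{q}_i\subseteq \mathcal{O}_F$ is fixed by $G=\langle\sigma\rangle$, the action of $\sigma$ descends to each factor $\mathcal{O}_K/\mathfrak{q}_i^{s_i}\mathcal{O}_K$ separately, and the product decomposition is $\sigma$-equivariant. Hence writing $k\in\mathcal{O}_K/\mathcal{I}\mathcal{O}_K$ as a tuple $(k_1,\dots,k_t)$, the relation $zk=\sigma(k)z$ becomes $z(k_1,\dots,k_t)=(\sigma(k_1),\dots,\sigma(k_t))z$, which is precisely the statement that within each component $\mathcal{R}_i$ we have the local relation $z k_i = \sigma(k_i) z$ modulo $\mathfrak{q}_i^{s_i}\mathcal{O}_K$. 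For the second, the image of $u+\mathcal{I}$ in the product $\prod_i \mathcal{O}_F/\mathfrak{q}_i^{s_i}$ is exactly $(u+\mathfrak{q}_1^{s_1},\dots,u+\mathfrak{q}_t^{s_t})$, so $z^n$ has the correct value $u+\mathfrak{q}_i^{s_i}$ in each $\mathcal{R}_i$.

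Putting these together, every structural relation defining the ring $\Lambda/\mathcal{I}\Lambda$ decomposes into the analogous relation in each $\mathcal{R}_i$, and the additive isomorphism promotes to a ring isomorphism $\Lambda/\mathcal{I}\Lambda\cong \mathcal{R}_1\times\cdots\times\mathcal{R}_t$.

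The substantive step, and the only place where something has to be checked rather than just unwound, is the $G$-equivariance of the CRT isomorphism in (\ref{CRTIsoms2}); this is what allows the twisted multiplication to be preserved componentwise. Everything else is formal: bookkeeping between finite direct sums over $j$ and finite direct products over $i$, and identifying images of $u$. I do not anticipate any difficulty beyond making this equivariance explicit.
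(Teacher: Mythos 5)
Your proposal is correct and follows essentially the same route as the paper: both rest on equation (\ref{eq:direct_sum}), the $G$-equivariant CRT decomposition (\ref{CRTIsoms2}), and the observation that $z^n=u+\Ic$ maps to $(u+\qf_1^{s_1},\dots,u+\qf_t^{s_t})$ componentwise. The paper merely phrases your sum--product interchange as an explicit map sending $(k_{j,1}+\qf_1^{s_1}\Oc_K,\dots,k_{j,t}+\qf_t^{s_t}\Oc_K)z^j$ to the tuple of its components and checks injectivity and surjectivity directly, which is the same argument in different packaging.
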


\begin{proof}
Using the isomorphism (\ref{CRTIsoms2}), we may describe $\Lambda/\Ic\Lambda$ as the ring
\[
\oplus_{i=0}^{n-1}(k_{i,1} + \qf_1^{s_1}\MCO_K,\dots,k_{i,t} + \qf_t^{s_t}\MCO_K) z^i.
\]  subject to the relations $z (k_1 + \qf_1^{s_1}\MCO_K,\dots,k_t + \qf_t^{s_t}\MCO_K) = (\sigma(k_1) + \qf_1^{s_1}\MCO_K,\dots, \sigma(k_t) + \qf_t^{s_t}\MCO_K) z$, and $z^n = (u + \qf_1^{s_1},\dots, u + \qf_t^{s_t})$.
We map the typical summand $(k_{j,1} + \qf_1^{s_1}\MCO_K,\dots,k_{j,t} + \qf_t^{s_t}\MCO_K)z^j$ to the element 
\[
\left( (k_{j,1} + \qf_1^{s_1}\MCO_K)z^j, \dots, (k_{j,t} + \qf_t^{s_t}\MCO_K)z^j \right).
\]
It is easy to see that this map is an injective ring homomorphism.  For surjectivity, we note that the ring $\mathcal{R}_1 \times \cdots \times \mathcal{R}_t$ is a sum of elements of the form $\left( 0, \dots, 0, (k_i + \qf_i^{s_i}\MCO_K) z^{l_i},0,\dots,0\right)$, and an element of this form is the image of   $\left(0,\dots, 0,k_i + \qf_i^{s_i}\MCO_K,0,\dots,0\right)z^{l_i}$. %\marginpar{changed $s_I$ to $s_i$}
\end{proof}

\begin{definition} \label{gen_cyc_alg_defn}
Given an extension of commutative rings $R\subset S$, a finite cyclic group $G = \langle \sigma \rangle$ of order $n$ acting on $S$ such that the action is trivial on $R$, and given an element $u \in R$ (not necessarily invertible, possibly even zero), we refer to the ring $\oplus_{i=0}^{n-1} S z^i$ subject to the relations $zs = \sigma(s)z$ for all $s\in S$, and $z^n = u$, as a generalized cyclic algebra, and denote it $(S/R, \sigma, u)$, or $\left(\dfrac{S}{R}, \sigma, u\right)$ when needed.
\end{definition}

Note that $R$ is naturally contained in the center of $(S/R, \sigma, u)$.  (The center could be larger in general.)  Thus, $(S/R, \sigma, u)$ is naturally an $R$-algebra.

Note too that in our definition of generalized cyclic algebras above, we do not require that $G$ act{s} faithfully on $S$.  Thus, there may be a non-trivial subgroup of $G$ that acts trivially on $S$.  
%(In fact, as we will see in examples below \textbf{Give Ref later}, we will encounter situations where the entire group $G$ acts trivially on $S$.)

The rings $\mathcal{R}_i$ in the statement of Lemma \ref{fund_lemma} are generalized cyclic algebras of the form $\left(\dfrac{\MCO_K/\qf^s\MCO_K }{ \MCO_F/\qf^s\MCO_F}, \sigma, u+\qf^s\right)$. Thus, 
$\Lambda/\Ic$ decomposes as the direct product of {generalized} cyclic algebras $({S/R}, \sigma, u)$, where the rings $R\subseteq S$ involved are all finite rings.  Under the isomorphism of the lemma, an ideal $I$ of $\Lambda/\Ic$ (itself arising as $\Jc/\Ic$ for some ideal $\Jc$ of $\Lambda$), will correspond to an ideal of the direct product 
$\mathcal{R}_1 \times \cdots \times \mathcal{R}_t$.
  Now, it is elementary that every ideal of  $\mathcal{R}_1 \times \cdots \times \mathcal{R}_t$ is of the form $I_1 \times I_2 \times \dots \times I_t$ where each $I_j$ is an ideal of $\mathcal{R}_j$. As a result, to identify an ideal $\Jc$ of $\Lambda$ that contains a fixed ideal $\Ic$ of $\MCO_F$ (and thus to identify the quotient $\Lambda/\Jc$), we may
just focus our attention on the case where $\Ic$ is a power of a single prime ideal, i.e., $\Ic = \qf^s$, where $\qf$ is some prime ideal of {$\MCO_F$},
and 
\[
\Lambda/\Ic\Lambda \cong \left(\dfrac{\MCO_K/\qf^s\MCO_K }{ \MCO_F/\qf^s\MCO_F}, \sigma, u+\qf^s \right) = \oplus_{j=0}^{n-1} (\MCO_K/\qf^s\MCO_K)z^j.
\]   

The ring $\left(\dfrac{\MCO_K/\qf^s\MCO_K }{ \MCO_F/\qf^s\MCO_F}, \sigma, u+\qf^s \right) $ is  a finite ring, of cardinality $n |\MCO_K/\qf^s\MCO_K|$.  In principle, therefore, the ideals of any one such ring should be easy to determine, by brute-force if necessary.  In many cases, however, the cyclic algebra $\left(\dfrac{\MCO_K/\qf^s\MCO_K }{ \MCO_F/\qf^s\MCO_F}, \sigma, u+\qf^s \right) $ reveals itself as isomorphic to one of a family of more familiar rings, whose ideals are already known.  We will explore such cases in the sections ahead.  

One phenomenon that qualitatively distinguishes one instance of a cyclic algebra of the form $\left(\dfrac{\MCO_K/\qf^s\MCO_K }{ \MCO_F/\qf^s\MCO_F}, \sigma, u+\qf^s \right) $ from another is how the prime ideal {$\qf$} extends to $K$.
Recall that since $K/F$ is a Galois extension of degree $n$, we have that $n=efg$, where $g$ counts the number of primes in the factorization of $\qf\MCO_K$, $e$ is the ramification index, and $f$ the inertial degree. We will focus in this paper on the \textit{unramified} situation, i.e., where $e=1$.

\begin{remark} \label{cyc_alg_skew_poly}
We wish to point out that the {generalized} cyclic algebra $(S/R, \sigma, u)$ can also be described as a quotient of a skew-polynomial ring with coefficients in $S$.  Recall that given a ring $S$ with a group $G = \langle \sigma \rangle$ acting on it, the skew polynomial ring $S[x; \sigma]$ is the set of all polynomials $s_0 + s_1 x + \cdots + s_t x^t$ ($s_i\in S$, $t\ge 0$), with multiplication twisted by the relation $x s = \sigma (x) s$ for all $s\in S$.  It is clear that $(S/R, \sigma, u)$ is the quotient of $S[x; \sigma]$ by the two-sided ideal generated by the polynomial $x^n-u$. Note that since $x^n$ acts trivially on $S$ in our situation, $x^n -u$ is actually in the center of $S[x; \sigma]$.
\end{remark}

\begin{remark}\label{rem:residue_matrices} 
In the case where $\Jc$ is the ideal generated an ideal $\Ic$ of $\Oc_F$, the quotient $\LA/\Ic\LA$ can naturally be thought of as a subring of a suitable matrix ring as follows: First note that  $\LA \subseteq \Mc_n(\Oc_K)$ via (\ref{eq:mat}).  Consider the ideal generated by $\Ic\subseteq \Oc_F\subseteq \Oc_K$ in $\Mc_n(\Oc_K)$: this is the ideal $\Mc_n(\Ic\Oc_K)$.  It is easy to see from (\ref{eqn:ILA_desc}) and the nature of the embedding defined in (\ref{eq:mat}) that $\Mc_n(\Ic\Oc_K) \cap \LA$ is precisely $\Ic\LA$.  It follows that $\LA/\Ic\LA \subseteq \Mc_n(\Oc_K)/ \Mc_n(\Ic\Oc_K) \cong \Mc_n(\Oc_K/\Ic\Oc_K)$.
Let $\bar{x}_{i}$ be the image of $x_i \in \Lambda$ in $\Lambda/\Ic\Lambda$, $i=1,\ldots,L$.   It follows  that we may view $\bar{x}_i$ as naturally embedded in $\Mc_n(\Oc_K/\Ic\Oc_K)$ by taking the representation $M(x_i)$ and modding the entries by $\Ic\Oc_K$.  We write this embedding as $M(\bar{x_i})$. 
%We note that $\LA \subseteq \Mc_n(\Oc_K)$
%Since $\Lambda/\Ic\Lambda \cong ( \bar{K}/\bar{F}/\bar{\sigma},\bar{u}) \cong \Mc_n(\bar{F})$, 
Hence, the typical space-time codeword $(M(x_1),\ldots,M(x_L)) \in \Cc$ in our coset code construction is a preimage under $\pi: \oplus_{i=0}^{L} \LA \rightarrow \oplus_{i=0}^{L} \LA/\Jc \subseteq \oplus_{i=0}^{L}\Mc_n(\Oc_K/\Ic\Oc_K) $ of the codeword 
\begin{equation}\label{eq:MnK}
(M(\bar{x}_1),\ldots,M(\bar{x}_L)) \in \oplus_{i=1}^{L} \Mc_n(\bar{K}).
\end{equation}
We will have occasion to use this matrix representation later.
\end{remark}

\begin{remark}
If $I$ is an ideal of $\MCO_K/\qf^s\MCO_K $ that is sent to itself under $\sigma$, then the set $\oplus_{j=0}^{n-1} Iz^j$ will be an ideal of $\left(\dfrac{\MCO_K/\qf^s\MCO_K }{ \MCO_F/\qf^s\MCO_F}, \sigma, u+\qf^s \right) $. 
\end{remark}

%*********************************************************************************************%
%
%
%
%**********************************************************************************************%

\section{The Inertial Case: $\Ic=\qf$, $g=e=1$, $f=n$}
\label{sec:inert}

When $g=e=1$, $\qf$ remains prime in $\MCO_K$, that is $\qf\Oc_K=\Qf$ with inertial degree $f=n$.  Then $\MCO_K/\qf\Oc_K=\Oc_K/\Qf$ and $\Oc_F/\qf$ are finite fields, that we denote respectively $\overline{K}$ and $\overline{F}$, and in fact $\overline{K}/\overline{F}$ is a cyclic Galois extension of degree $n$ with generator $\sigmabar$, the induced action of $\sigma$. (All this is standard, and can be deduced from \cite[Chapter I, Theorem 6.6, or Chapter III, Theorem 1.4]{Jan} for instance.)
Thus
\[
\Lambda/\Ic\Lambda \cong \oplus_{j=0}^{n-1} (\MCO_K/\qf\MCO_K)z^j \cong \oplus_{j=0}^{n-1}\bar{K}z^j,
\]   
with $z (k + \Qf) = (\sigma(k)+\Qf)z$ and $z^n=\bar{u}=u+\qf$. 

%*******************************************************************************************************%
\subsection{The case $u\not\in \qf$}

\begin{proposition} \label{prop:f=n} 
Suppose that $\Ic=\qf$ is a prime ideal in $\Oc_F$ which remains inert in $\Oc_K$: $\qf\Oc_K=\Qf$ for $\Qf$ a prime of $\Oc_K$. 
If $u\not\in \qf$, then $\Lambda/\Ic\Lambda \cong (\bar{K}/\bar{F}, \sigmabar, \bar{u}) \cong \Mc_n(\overline{F})$.  Moreover, the only proper two-sided ideal $\Jc$ of $\Lambda$  that contains $\Ic\ (=\qf)$ is $\Ic\Lambda = \oplus_{j=0}^{n-1} \qf\Oc_K z^j$.
\end{proposition}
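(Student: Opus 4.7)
The first step is to unpack the isomorphism $\Lambda/\Ic\Lambda \cong (\overline{K}/\overline{F}, \sigmabar, \ubar)$ that is essentially recorded in the paragraph preceding the proposition. Since $\qf$ is inert, $\MCO_K/\qf\MCO_K = \Kbar$ is a field, $\MCO_F/\qf = \Fbar$ is a field, and $\Kbar/\Fbar$ is cyclic of degree $n$ with generator $\sigmabar$. The displayed description $\Lambda/\Ic\Lambda \cong \oplus_{j=0}^{n-1}\Kbar z^j$ together with the twisting relations $zk = \sigmabar(k)z$ and $z^n = \ubar$ is then precisely the definition of the cyclic algebra $(\Kbar/\Fbar, \sigmabar, \ubar)$.

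The core step is to establish $(\Kbar/\Fbar, \sigmabar, \ubar) \cong \Mc_n(\Fbar)$. The plan is to invoke two classical facts in sequence. First, since $u \not\in \qf$ the element $\ubar$ is a nonzero element of $\Fbar$, hence a unit, and so the cyclic algebra $(\Kbar/\Fbar, \sigmabar, \ubar)$ is a central simple $\Fbar$-algebra of dimension $n^2$ over $\Fbar$; this is the standard fact that a cyclic algebra built from a cyclic Galois extension and a nonzero twist element is central simple (and one may if desired give a direct proof by taking a nonzero two-sided ideal, using the linear independence of $1,z,\dots,z^{n-1}$ over $\Kbar$ together with the fact that no nontrivial power of $\sigmabar$ fixes $\Kbar$ pointwise, to show the ideal must meet $\Fbar$ nontrivially). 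Second, by Wedderburn's theorem on finite division rings, every finite division ring is commutative, so the Brauer group of the finite field $\Fbar$ is trivial; Artin--Wedderburn then forces our central simple $\Fbar$-algebra of dimension $n^2$ to be $\Mc_n(\Fbar)$.

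The ideal statement now follows formally. Since $\Ic\Lambda$ is the two-sided ideal of $\Lambda$ generated by $\Ic$, any two-sided ideal $\Jc$ of $\Lambda$ containing $\Ic$ automatically contains $\Ic\Lambda$, and the standard correspondence theorem for rings gives a bijection $\Jc \leftrightarrow \Jc/\Ic\Lambda$ between such ideals and the two-sided ideals of $\Lambda/\Ic\Lambda \cong \Mc_n(\Fbar)$. Matrix rings over a field are simple, so the only two-sided ideals of $\Mc_n(\Fbar)$ are $(0)$ and the whole ring, corresponding to $\Jc = \Ic\Lambda$ and $\Jc = \Lambda$ respectively. Thus $\Ic\Lambda$ is the unique proper two-sided ideal of $\Lambda$ containing $\Ic$, and the explicit description $\Ic\Lambda = \oplus_{j=0}^{n-1} \qf\MCO_K z^j$ is just (\ref{eqn:ILA_desc}) applied to $\Ic = \qf$.

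The main conceptual obstacle is the central simplicity assertion for the cyclic algebra; everything else is either bookkeeping or an appeal to Wedderburn. In the writeup one could either cite the general cyclic algebra result from a standard reference (e.g., Pierce or Reiner) or, given the self-contained flavor of the paper, include the short direct argument sketched above.
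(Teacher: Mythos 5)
Your proposal is correct and follows essentially the same route as the paper: identify the quotient with the cyclic algebra $(\overline{K}/\overline{F},\sigmabar,\ubar)$, use central simplicity together with Wedderburn's theorems (Artin--Wedderburn plus the triviality of the Brauer group of a finite field) to get $\Mc_n(\overline{F})$, and then conclude the ideal statement from the simplicity of $\Mc_n(\overline{F})$ via the correspondence theorem. The only difference is cosmetic: you make the central simplicity step and the dimension count explicit, whereas the paper cites Wedderburn's structure theorem directly to write the algebra as $M_t(D)$ with $D$ central over $\overline{F}$.
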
 
\begin{proof}
If $u\not\in \qf$, then $\ubar$ is a nonzero element of $\overline{F}$, and thus $\Lambda/\Ic\Lambda$ is the usual cyclic algebra $(\overline{K}/\overline{F}, \sigmabar, \ubar)$ of degree $n$,  
{which is}, by Wedderburn's theorem, $\overline{F}$-isomorphic to $M_t(D)$ for {a} suitable $t$, where $D$ is a division algebra with center $\overline{F}$.  By another theorem of Wedderburn, since $\overline{F}$ is finite, the only $\overline{F}$-central division algebra is $\overline{F}$ itself. (See \cite[Theorem 7.4, Theorem 7.24]{Reiner}, for instance, for Wedderburn's theorems.) We therefore find that $\Lambda/\Ic\Lambda \cong \Mc_n(\overline{F})$.  Next, note that $\Mc_n(\overline{F})$ is a simple ring, i.e., it has no two-sided ideals other than the zero ideal and the whole ring. It follows from the correspondence between ideals of $\Lambda$ that contain $\Ic\Lambda$ and ideals of $\Lambda/\Ic\Lambda$ that the only proper two-sided ideal $\Jc$ of $\Lambda$ that contains $\Ic$  is $\Ic\Lambda = \oplus_{j=0}^{n-1} \qf\Oc_K z^j$.

\end{proof}

{The cyclic algebras considered in the three examples below are division, see \cite{perfect} for a proof.
\begin{example}\label{ex:q5} 
Consider the cyclic division algebra $\Dc=(\QQ(\imath,\sqrt{5})/\QQ(\imath),\sigma,\imath)$.  (Thus,  $K =\QQ(\imath,\sqrt{5}) $ and $F=\QQ(\imath)$ here.) The ring of integers in $\QQ(\imath)$ and $\QQ(\sqrt{5})$ are $\ZZ[\imath]$ and $\ZZ[(1+\sqrt{5})/2]$, respectively(\cite[Chapter I, Theorem 9.2]{Jan}).  It follows from \cite[Chapter I, Corollary 9.4]{Jan} for instance that the ring of integers in $K$ is $\ZZ[\imath,(1+\sqrt{5})/2]$.  We have
\[
\Lambda = \ZZ[\imath,(1+\sqrt{5})/2]\oplus\ZZ[\imath,(1+\sqrt{5})/2]z,~z^2=\imath.
\]
 Take $\qf = \langle 1+\imath \rangle$ in $\Oc_{\QQ(\imath)} = \ZZ[\imath]$.
%and $\Jc=(1+\imath)\Lambda$. Then $\Ic=\Jc\cap\ZZ[\imath]
Then $\Ic=\langle 1+\imath \rangle$ and $\ZZ[\imath]/\Ic\cong \FF_2$. The ideal $\Ic$ is inert in $\QQ(\imath,\sqrt{5})$ (the element $(1+\sqrt{5})/2$ modulo $(1+\imath)\Oc_{\QQ(\imath,\sqrt{5})}$ generates the field extension $\FF_4\supseteq\FF_2$, since it satisfies $x^2 - x - 1$), and we have by Proposition \ref{prop:f=n} above
\[
\Lambda/\Ic\Lambda\cong \Mc_2(\FF_2).
\]
Thus, by the last statement of the proposition above, $\Jc = \Ic\Lambda=(1+\imath) \Lambda$ is the only ideal of $\Lambda$ which contains $\langle 1+\imath \rangle\LA$, and
\[
\Lambda/\Jc \cong \Mc_2(\FF_2),
\]
as also shown in \cite{Luzzi}.
\end{example}
\begin{example}\label{ex:q7} 
Let $\omega_7$ be a primitive 7th root of unity. 
Consider the cyclic division algebra $\Dc=(\QQ(\omega_3,\omega_7+\omega_7^{-1})/\QQ(\omega_3),\sigma,\omega_3, \sigma, \omega_3)$. (Thus,  $K =\QQ(\omega_3,\omega_7+\omega_7^{-1}) $ and $F=\QQ(\omega_3)$ here.)
The ring of integers of $\QQ(\omega_3)$ is $\ZZ[\omega_3]$ (\cite[Chapter I, Theorem 9.2]{Jan}). The ring of integers of the field $\QQ(\omega_7+\omega_7^{-1})$ is $\ZZ[\omega_7+\omega_7^{-1}]$ (see \cite[Proposition 2.16]{Wash} for instance).  By \cite[Chapter I, Corollary 9.4]{Jan} for instance, the ring of integers of $K$ is $\ZZ[\omega_3,\omega_7+\omega_7^{-1}]$.
We have
\[
\Lambda = \oplus_{j=0}^{2}\ZZ[\omega_3,\omega_7+\omega_7^{-1}]z^j,~z^3=\omega_3.
\]

Consider the prime ideal $\langle 2 \rangle$ of $\ZZ$. This ideal stays prime in $\Oc_{\QQ(\omega_3)} =\ZZ[\omega_3]$ (note that $\omega_3$ satisfies $x^2 + x + 1$ over $\QQ$, and hence generates the field extension $\FF_4\supseteq\FF_2$ modulo $2\ZZ[\omega_3]$).  We now 
take $\Ic=\qf = \langle 2 \rangle$ in $\Oc_{\QQ(\omega_3)}$. The ideal $\Ic$ remains prime in $\Oc_{\QQ(\omega_3,\omega_7+\omega_7^{-1})} = \ZZ[\omega_3,\omega_7+\omega_7^{-1}] $.  (This can be seen by noting that
$\omega_7+\omega_7^{-1}$ satisfies $x^3+x^2-2x-1$ over $\QQ$ and hence generates the field extension $\FF_8\supseteq\FF_2$ modulo $2 \ZZ[\omega_7+\omega_7^{-1}]$.  Thus, the residue of  $\ZZ[\omega_3,\omega_7+\omega_7^{-1}] $ contains the compositum  of $\FF_4 = \FF_{2^2}$ and $\FF_8 = \FF_{2^3}$, i.e., $\FF_{2^6}=\FF_{64}$.)  Hence, by Proposition \ref{prop:f=n} above
\[
\Lambda/\Ic\Lambda\cong \Mc_3(\FF_4).
\]

Thus, by the last statement of the proposition above, $\Jc = \Ic\Lambda=2 \Lambda$ is the only ideal of $\Lambda$ which contains $2\LA$, and
\[
\Lambda/\Jc\Lambda\cong \Mc_3(\FF_4),
\]
as also shown in \cite{pstm}.
\end{example}
\begin{example}\label{ex:q15} 
Let $\omega_{15}$ be a primitive 15th root of unity.
Consider the cyclic division algebra $\Dc=(\QQ(\imath,\omega_{15}+\omega_{15}^{-1})/\QQ(\imath),\sigma,\imath)$.  (Thus,  $K = \QQ(\imath,\omega_{15}+\omega_{15}^{-1})$ and $F= \QQ(\imath)$ here.)  As in the previous example, we find that the ring of integers of $K$ is $\ZZ[\imath,\omega_{15}+\omega_{15}^{-1}]$. We take
\[
\Lambda = \oplus_{j=0}^{3}\ZZ[\imath,\omega_{15}+\omega_{15}^{-1}]z^j,~z^4=\imath.
\]

Consider the ideal $\langle 2 \rangle $ of $\ZZ$.  It becomes the square of the ideal $\langle 1+\imath \rangle$ in $\Oc_{\QQ(\imath)} = \ZZ[\imath]$ (thus, $\langle 2 \rangle$ is totally ramified in $\QQ(\imath)$ and $\ZZ[\imath]/\langle 1 + \imath \rangle \cong \FF_2$).  Now take $\Ic = \qf = \langle 1+\imath \rangle$ in $\ZZ[\imath]$.  The prime $\langle 1+\imath \rangle$ remains prime in $\Oc_{\QQ(\imath,\omega_{15}+\omega_{15}^{-1})} = \ZZ[\imath,\omega_{15}+\omega_{15}^{-1}]$.  
(This can be seen by first noting that $\langle 2 \rangle$ remains prime in the extension $\QQ(\omega_{15}+\omega_{15}^{-1})/\QQ$. This follows from the fact that $\omega_{15}+\omega_{15}^{-1}$ satisfies $x^4 - x^3 - 4x^2 + 4x + 1$ over $\QQ$, and hence generates the field extension $\FF_{16} \supseteq \FF_2$. Next, by considering the fact that $\QQ(\imath,\omega_{15}+\omega_{15}^{-1})$ is the compositum of the totally ramified extension $\QQ(\imath)/\QQ$ and the inert extension $\QQ(\omega_{15}+\omega_{15}^{-1})/\QQ$, we find that $\langle 1+\imath \rangle$ stays prime in $\ZZ[\imath,\omega_{15}+\omega_{15}^{-1}]$.)

Hence, by Proposition \ref{prop:f=n} above
\[
\Lambda/\Ic\Lambda\cong \Mc_4(\FF_2).
\]

Thus, by the last statement of the proposition above, $\Jc = \Ic\Lambda=(1+\imath) \Lambda$ is the only ideal of $\Lambda$ which contains $(1+\imath)\LA$ in $\ZZ[\imath]$, and
\[
\Lambda/\Jc\Lambda\cong \Mc_4(\FF_2),
\]
as also shown in \cite{pstm}.
\end{example}

These three examples cover the cases studied in the context of space-time coded modulation \cite{Luzzi,pstm}.

%Note that the structure of $\LA$ in the three examples above  shows that $\LA \subseteq \Mc_n(\Oc_K)$ via (\ref{eq:mat}) in each of the the examples.  Consider the ideal generated by $\Ic\subseteq \Oc_F\subseteq \Oc_K$ in $\Mc_n(\Oc_K)$: this is the ideal $\Mc_n(\Ic\Oc_K)$.  It is easy to see from (\ref{eqn:ILA_desc}) and the nature of the embedding defined in (\ref{eq:mat}) that $\Mc_n(\Ic\Oc_K) \cap \LA$ is precisely $\Ic\LA$.  It follows that $\LA/\Ic\LA \subseteq \Mc_n(\Oc_K)/ \Mc_n(\Ic\Oc_K) \cong \Mc_n(\bar{K})$.
%Let $\bar{x}_{i}$ be the image of $x_i \in \Lambda$ in $\Lambda/\Ic\Lambda$, $i=1,\ldots,L$.   It follows  that we may view $\bar{x}_i$ as naturally embedded in $\Mc_n(\bar{K})$ by taking the representation $M(x_i)$ and modding the entries by $\Ic\Oc_K$.  We write this embedding as $M(\bar{x_i})$. 
%%We note that $\LA \subseteq \Mc_n(\Oc_K)$
%%Since $\Lambda/\Ic\Lambda \cong ( \bar{K}/\bar{F}/\bar{\sigma},\bar{u}) \cong \Mc_n(\bar{F})$, 
%Hence, the typical space-time codeword $(M(x_1),\ldots,M(x_L)) \in \Cc$ in our coset code construction is a pre image under $\pi: \oplus_{i=0}^{L} \LA \rightarrow \oplus_{i=0}^{L} \LA/\Jc \subseteq \oplus_{i=0}^{L}\Mc_n(\bar{K}) $ of the codeword 
%\begin{equation}\label{eq:MnK}
%(M(\bar{x}_1),\ldots,M(\bar{x}_L)) \in \oplus_{i=1}^{L} \Mc_n(\bar{K}).
%\end{equation}

Recall that the design criterion for space-time coded modulation is the minimum determinant, which from (\ref{eq:dhbound}) satisfies
\[
\Delta_{min} \geq \min\left( d_H(\bar{\Cc})^2 \min_{x_i \not\in\Jc} |\det(X_i)|^2, \min_{0\neq x_i \in \Jc} |\det(X_i)|^2\right),
\]
where $X_i=M(x_i)$.
In the three examples above $\Jc=(\alpha)$ is a principal ideal with 
$\alpha\in \Oc_F$, so $M(x_i)=M(\alpha) M(x'_i)$ for some $x'_i\in \LA$. Hence 
\begin{equation}\label{eq:minprin}
\Delta_{min} \geq \min_{0\neq X_i} |\det(X_i)|^2 \min\left( d_H(\bar{\Cc})^2 , |\alpha|^{2n}\right),
\end{equation}
since 
\[
M(\alpha)=
\begin{bmatrix}
\alpha & & \\
       & \ddots & \\
       &        & \alpha 
\end{bmatrix},~\text{and therefore} \ \det(M(\alpha))=\alpha^n.
\]
Thus, to increase the lower bound for $\Delta_{min}$ we need to construct examples where $d_H(\bar{\Cc})$ is at least $|\alpha|^n$.  So, for Example \ref{ex:q5} we want a code $\bar{\Cc}$ over 
$\Mc_2(\FF_2)$ with $d_H(\bar{\Cc})$ at least $2$, for Example \ref{ex:q7} a code $\bar{\Cc}$ over $\Mc_3(\FF_4)$ with $d_H(\bar{\Cc})$ at least $2^3$ and for Example \ref{ex:q15} a code $\bar{\Cc}$ over $\Mc_4(\FF_2)$ with $d_H(\bar{\Cc})$ at least $2^2$.

\begin{example}
We continue Example \ref{ex:q5}, and want to construct a space-time codeword $X=(X_1,X_2,X_3)$. By Remark \ref{rem:residue_matrices}, a codeword from $\bar{\Cc}$ is of the form
\[
( M(\bar{x}_1),~M(\bar{x}_2),~M(\bar{x}_3)),
\]
where
\[
M(\bar{x}_i)=
\begin{bmatrix}
\bar{x}_{i,0} & \bar{u}\bar{\sigma}(\bar{x}_{i,1}) \\
\bar{x}_{i,1} & \bar{\sigma}( \bar{x}_{i,0}) \\
\end{bmatrix},~i=1,2,3.
\]
Since we only need a Hamming distance $d_H(\bar{\Cc})=2$, a simple parity code is enough. We choose $\bar{\Cc}$ to be the subspace of $\oplus_{i=1}^{3} \LA/\Jc \subseteq \oplus_{i=1}^{3} \Mc_2(\FF_4)$ given by $M(\bar{x}_3) = M(\bar{x}_1) + M(\bar{x}_2)$.  We then choose $\Cc$ to be the submodule of triples $(x_1, x_2, x_3) \in \LA$ satisfying $x_3 = x_1 + x_2$.
This gives the following space-time codeword:
\[
( M(x_1),~M(x_2),~M(x_1)+M(x_2))=(M(x_1),M(x_2),M(x_3)),~x_3=x_1+x_2,
\]
namely
\[
\left(
\begin{bmatrix}
x_{1,0} & \imath\sigma(x_{1,1}) \\
x_{1,1} & \sigma( x_{1,0}) \\
\end{bmatrix},~
\begin{bmatrix}
x_{2,0} & \imath\sigma(x_{2,1}) \\
x_{2,1} & \sigma( x_{2,0}) \\
\end{bmatrix},~
\begin{bmatrix}
x_{1,0}+x_{2,0} & \imath\sigma(x_{2,1}+x_{1,1}) \\
x_{1,1}+x_{2,1} & \sigma( x_{1,0}+x_{2,0}) \\
\end{bmatrix}
\right).
\]
We observe here that the lower bound 
\[
\Delta_{min}=\min_{0\neq X}|\det(\sum_{i=1}^3 X_iX_i^*)| \geq 4 \min_{x_i\neq 0}|\det(M(x_i)) |^2
\]
from (\ref{eq:minprin}) is actually achieved  with the codeword $(M(x_1),0,M(x_1))$, when $x_1$ is chosen such that $|\det(M(x_1))|^2$ is minimized.
% 
%since
%\[
%|\det(2M(x_1)M(x_1)^*)| =4|\det(M(x_1))|^2
%\]
%and it is enough to choose $x_1$ such that $|\det(M(x_1))|^2$ is minimized.
\end{example} 

%%%%%%%%%%%%%%%%%%%%%%%%%%%%%%%%%%%%%%%%%%%%%%%%%%%%%%%%%%%added example

\begin{example} \label{ex:general_scheme}  
We describe here a general scheme for constructing a code $\Cc$ such that $\bar{\Cc}$ has a prescribed minimum distance, in the special case where $\Jc = \Ic\LA = \qf\LA$, and the extension $K/F$ is inertial with respect to $\qf$.  This scheme does not take full advantage of the structure of the residue ring $\LA/\Ic\LA$, but all the same, has the advantage of flexibility.  It works both when $u \not\in\qf$ (as in this subsection), as well as the case when $u\in\qf$ (\S \ref{subsectn:inert_ubar_zero} ahead).  It can also be adapted to the case when $\qf$ splits (with no ramification, as in \S \ref{sec:noram} ahead), as well as to other cases.

Our starting point is  (\ref{eq:direct_sum}), {namely}:
%\begin{equation}
\[
\LA/\Ic\LA \cong \oplus_{i=0}^{n-1} (\MCO_K/\Ic\MCO_K)z^i.
\]
%\end{equation}

In our situation, $\Oc_K/\Ic\Oc_K = \bar{K}$ is a field.  We choose a code $\Bc$ of the desired minimum distance $d_H$ and desired length $L$ over $\bar{K}$.  Thus, $\Bc$ is a subgroup of $\oplus_{i=1}^L \bar{K}$. Our strategy will be to incorporate the entries of the code $\Bc$ into just the first summand of $\LA/\Ic\LA$ on the right side of (\ref{eq:direct_sum}). Recall from Remark \ref{rem:residue_matrices} above that   we may view $\LA/\Ic\LA$ as a subset of $\Mc_n(\bar{K})$, obtained by modding out the entries of the matrix in (\ref{eq:mat}) by $\Ic\Oc_K$. As in that remark, we write $M(\bar{x}_i)$ for the matrix obtained by modding out the matrix $M(x_i)$ corresponding to $x_i\in\LA$, and we write $\bar{x}_{i,j}$ for the entries in the first column of $M(\bar{x}_i)$.    We will choose as our outer code $\bar{\Cc} \subseteq \oplus_{i=1}^L \LA/\Ic\LA \subseteq \oplus_{i=1}^L \Mc_n(\bar{K})$ those $L$-tuples of matrices coming from $\LA/\Ic\LA$ such that $(\bar{x}_{1,0}, \dots, \bar{x}_{L,0})$ belong to the code $\Bc$.  (Thus, the remaining entries $\bar{x}_{i,j}$ of the first columns of $M(\bar{x}_i)$ are allowed to be arbitrary for $j=1, \dots, n-1$, and the remaining columns then follow the pattern of the matrix in (\ref{eq:mat}).) We will now define $\Cc$ to be $\pi^{-1} (\bar{\Cc})$, where we recall that $\pi$ is the natural projection $\oplus_{i=1}^L \LA \rightarrow \oplus_{i=1}^L \LA/\Ic\LA$.  It is clear that $\bar{\Cc}$ has the desired minimum distance $d_H$, since the $L$-tuple of $(1,1)$ entries of the matrices $M(\bar{x}_i)$ comes from the code $\Bc$.

\end{example}

\begin{example} We illustrate Example \ref{ex:general_scheme} above in the context of
Example \ref{ex:q15}. We want to construct a space-time codeword $X=(X_1,\ldots,X_L)$ over the ring $\LA$ of that example such that the code $\bar{\Cc}$ over $\LA/\Ic\LA$ has Hamming distance 4. To do so, we  may use an $(L,k)$ Reed-Solomon code over $\FF_{16}$, that is a subspace of dimension $k$ of $\oplus_{i=1}^L\FF_{16}$. The minimum distance of a Reed-Solomon code is $L-k+1$, we thus choose $k$ and $L$ such that $L-k+1=4$ and $L \leq 16$. To obtain a codeword of length $L$, take a polynomial $p(x)\in\FF_{16}[x]$ of degree $k-1$, and evaluate it in $L$ distinct values of $\FF_{16}$. We may take $\Bc$ of the previous example to be this Reed-Solomon code to obtain the desired code $\Cc$ over $\LA$.
\end{example}

A similar construction can be given as a concrete instance of Example \ref{ex:q7}.

%***************************************************************************************************************%
\subsection{The case $u\in \qf$} \label{subsectn:inert_ubar_zero}

\begin{proposition} \label{prop:inert_s_one_ubar_zero} 
Suppose that $\Ic=\qf$ is a prime ideal in $\Oc_F$ which remains inert in $\Oc_K$: $\qf\Oc_K=\Qf$ for $\Qf$ a prime of $\Oc_K$. 
If $u\in \qf$, then 
\[
\Lambda/\Ic\Lambda \cong (\bar{K}/\bar{F},\sigmabar, 0) \cong \overline{K}[x,\sigmabar]/\langle x^n\rangle,
\]
where $\overline{K}[x,\sigmabar]$ is the ring of twisted polynomials with indeterminate $x$ and coefficients in $\overline{K}$. Furthermore, the only possible proper ideals $\Jc$ of $\LA$ that contain $\Ic  = \qf$ are the ideals $\Jc_i = \langle z^i \rangle $, $i=1, \dots, n-1$, and for each such $\Jc_i$,
\[
\LA/\Jc_i \cong \overline{K}[x,\sigmabar]/\langle x^i\rangle \cong\oplus_{j=0}^{i-1} \overline{K} z^j,
\]
subject to $z \overline{k} = \sigmabar(\overline{k}) z$ and $z^i = 0$. In particular, $\LA/\Jc_1\cong \overline{K}$. 
\end{proposition}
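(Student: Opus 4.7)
The plan is to combine the general structure results of Section \ref{sec:prel} with a direct analysis of the ideal lattice of a truncated skew polynomial ring. The first isomorphism is essentially an unwrapping of definitions: since $u \in \qf$ forces $\bar{u} = 0$ in $\overline{F}$, equation (\ref{eq:direct_sum}) specializes to $\Lambda/\Ic\Lambda \cong (\overline{K}/\overline{F}, \sigmabar, 0)$, and Remark \ref{cyc_alg_skew_poly} identifies this generalized cyclic algebra with the quotient $R := \overline{K}[x;\sigmabar]/\langle x^n\rangle$ of the skew polynomial ring by its central element $x^n$.

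The heart of the proof will be classifying the two-sided ideals of $R$. I will show that they are exactly $\langle x^i\rangle$ for $i = 0, 1, \dots, n$, and then use the standard correspondence theorem to pull this back to $\Lambda$. Given a nonzero two-sided ideal $I \subseteq R$, let $m$ be the smallest index such that some element of $I$ has nonzero coefficient on $x^m$, and pick $a = \sum_{j \ge m} c_j x^j \in I$ with $c_m \neq 0$. Left-multiplication by the constant $c_m^{-1} \in \overline{K} \subseteq R$ produces $a' = x^m + \sum_{j > m} b_j x^j \in I$. Using the commutation rule $\bar{k} x^m = x^m \sigmabar^{-m}(\bar{k})$, valid because $\sigmabar \in \mathrm{Gal}(\overline{K}/\overline{F})$ is invertible, I can rewrite $a' = x^m(1 + \xi)$ with $\xi \in \langle x\rangle$. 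Since $\langle x\rangle^n \subseteq \langle x^n\rangle = 0$, the element $\xi$ is nilpotent, so $1+\xi$ is a unit in $R$. Right-multiplying $a'$ by $(1+\xi)^{-1}$ then yields $x^m \in I$, whence $\langle x^m\rangle \subseteq I$; minimality of $m$ gives the reverse inclusion, so $I = \langle x^m\rangle$.

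Translating back, the correspondence theorem puts ideals $\Jc$ of $\Lambda$ containing $\qf\Lambda$ in bijection with ideals of $R$, and the proper such $\Jc$ that strictly contain $\qf\Lambda$ correspond to $\langle x^i\rangle$ for $i = 1, \dots, n-1$. Denoting the preimage of $\langle x^i\rangle$ in $\Lambda$ by $\Jc_i$, one checks directly that $\Jc_i$ is the two-sided ideal of $\Lambda$ generated by $z^i$ together with $\qf$. The quotient identification is then a one-line third-isomorphism-theorem computation: $\Lambda/\Jc_i \cong R/\langle x^i\rangle \cong \overline{K}[x;\sigmabar]/\langle x^i\rangle$, where the last step uses $x^n \in \langle x^i\rangle$; a $\overline{K}$-basis is $\{1, x, \dots, x^{i-1}\}$, giving the direct sum decomposition stated. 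Setting $i=1$ recovers $\Lambda/\Jc_1 \cong \overline{K}$.

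The main obstacle is really just careful bookkeeping in the noncommutative skew setting: one must verify that the left-normalization by $c_m^{-1}$, the factorization $a' = x^m(1+\xi)$, and the right-multiplication by $(1+\xi)^{-1}$ each use only the left or right ideal property currently available, and that $\sigmabar$ being a Galois automorphism is what legitimizes the commutation past $x^m$. Beyond these formal checks there is no real difficulty; the crux is simply the observation that $R$ is a chain ring whose unique maximal ideal $\langle x\rangle$ is nilpotent of index $n$.
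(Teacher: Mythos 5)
Your proposal is correct, and the reduction steps (specializing (\ref{eq:direct_sum}) to $\bar{u}=0$, invoking Remark \ref{cyc_alg_skew_poly}, then using the correspondence and third isomorphism theorems) coincide with the paper's. Where you genuinely diverge is the key step: the paper classifies the two-sided ideals of the \emph{full} skew polynomial ring $\overline{K}[x;\sigmabar]$ by citing \cite[Prop.\ 1.6.25]{Rowen} (they are $\langle p(x^n)x^i\rangle$ with $p(x^n)$ in the center $\overline{F}[x^n]$), and then reads off which of these contain $\langle x^n\rangle$; this requires knowing that $\sigmabar$ acts faithfully, so that the center really is $\overline{F}[x^n]$. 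You instead classify the ideals of the finite quotient $R=\overline{K}[x;\sigmabar]/\langle x^n\rangle$ directly: normalize a lowest-order element by a left unit, factor it as $x^m(1+\xi)$ with $\xi$ in the nilpotent ideal $\langle x\rangle$, and strip off the unit $1+\xi$ on the right to land on $x^m$, showing $R$ is a chain ring with ideals $\langle x^i\rangle$, $i=0,\dots,n$. Your argument is self-contained, avoids any appeal to the center of the skew polynomial ring (and hence does not even need the faithfulness of the $\sigmabar$-action, which the paper's route does), at the cost of the hands-on normalization computation; the paper's route is shorter given the cited result and also exhibits the whole ideal lattice of $\overline{K}[x;\sigmabar]$. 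A small point in your favor: you correctly describe $\Jc_i$ as generated by $z^i$ \emph{together with} $\qf$, which is the precise preimage under the correspondence (the paper writes $\langle z^i\rangle$, which coincides with this only when $u$ generates $\qf$).
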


\begin{proof}
If $u\in \qf$, then $\ubar = 0$, so $\Lambda/\Ic\Lambda$ is the ring $\oplus_{j=0}^{n-1} \overline{K} z^j$, subject to $z \overline{k} = \sigmabar(\overline{k}) z$ and $z^n = 0$. This is just the generalized cyclic algebra $(\bar{K}/\bar{F},\sigmabar, 0)$.  It is also the quotient of the twisted polynomial ring $\overline{K}[x,\sigmabar]$ (where $x \overline{k} = \sigmabar(\overline{k}) x$, $\bar{k}=k+\Qf$) by the ideal $(x^n)$ (see Remark \ref{cyc_alg_skew_poly}).  Note that the coefficient ring $\overline{K}$ is a field, and that since $\overline{K}/\overline{F}$ is a Galois extension of degree $n$, $\langle \sigma \rangle$ acts faithfully on $\overline{K}$, that is, $\sigma^i = \text{id}$ implies $i$ is a multiple of $n$.

The two-sided ideals of the twisted polynomial ring $\overline{K}[x,\sigmabar]$ are well known \cite[Prop. 1.6.25]{Rowen}: they are of the form $\langle p(x^n)x^i \rangle$, where $p(x^n) \in\overline{F}[x^n]$, which is the center of $\overline{K}[x,\sigmabar]$.  Thus, the proper two-sided ideals of $\overline{K}[x,\sigmabar]$ that contain $\langle x^n \rangle$ are only those of the form $\langle x^i \rangle$, $i=1,\dots, n$, to which correspond ideals $\Jc_i$ of $\overline{K}[x,\sigmabar]/\langle x^n\rangle$ of the form $\langle x^i\rangle/\langle x^n\rangle$ (which in turn correspond to ideals $\langle z^i\rangle+\Ic\Lambda$ of $\Lambda/\Ic\Lambda$), so that
\[
\LA/\Jc_i \cong \overline{K}[x,\sigmabar]/\langle x^i\rangle
\]
is a ring of the form $\oplus_{j=0}^{i-1} \overline{K} z^j$, subject to $z \overline{k} = \sigmabar(\overline{k}) z$ and $z^i = 0$. 
Exactly as in the proof of Proposition \ref{prop:f=n}, we find from the correspondence between ideals of $\Lambda$ that contain $\Ic\Lambda$ and the ideals of $\Lambda/\Ic\Lambda$ that these ideals $\Jc_i$ are the only two-sided ideals of $\Lambda$ that contain $\Ic = \qf$.

Notice that taking $i=1$ we find $\overline{K}$ to be a possible quotient of $\LA$.  
\end{proof}

Suppose that $\theta$ is an algebraic integer such that $\QQ(\theta,\imath)/\QQ(\imath)$ is cyclic of degree $n$, $\Dc=(\QQ(\theta,\imath)/\QQ(\imath),\sigma,1+\imath)$ is division, and $\langle 1+\imath \rangle$ is inert in the extension $\QQ(\theta,\imath)/\QQ(\imath)$. We pick $\Ic=\langle 1+\imath\rangle $.  Then $u=1+\imath \in \Ic$, and the above proposition tells us that the possibilities for  $\Jc$ are the ideals $<z^j>$, $j=1,\ldots,n$. Extreme cases are $\Jc_n=\langle z^n \rangle= (1+\imath)\Lambda$ in which case 
\[
\Lambda/\Jc_n \cong \bar{K}\oplus \ldots \oplus \bar{K}z^{n-1}
\] subject to $\bar{z}^n = 0$,
and $\Jc_1=\langle z \rangle = z\Lambda$, which interestingly yields 
\[
\Lambda/\Jc_1 \cong \bar{K}.
\]
In the latter case, a space-time codeword $X=(M(x_1),\ldots,M(x_L))$ is sent to the codeword 
\[
(\bar{x}_{1,0},\bar{x}_{2,0},\ldots,\bar{x}_{L,0})\in \oplus_{i=1}^L\bar{K}
\] 
under the projection $\pi: \oplus_{i=1}^L \LA \rightarrow \oplus_{i=1}^L \LA/\Jc_1 \cong \oplus_{i=1}^L\bar{K}$,
where $x_i=\sum_{j=0}^{n-1}x_{i,j}z^j$.
\begin{example}\label{ex:q52}
Consider the same field extension $\QQ(\imath,\sqrt{5})/\QQ(\imath)$ as in Example \ref{ex:q5} and the cyclic division algebra $\Dc=(\QQ(\imath,\sqrt{5})/\QQ(\imath),\sigma,1+\imath)$, with 
\[
\Lambda = \ZZ[\imath,(1+\sqrt{5})/2]\oplus\ZZ[\imath,(1+\sqrt{5})/2]z,~z^2=\imath+1.
\]
As seen in Example \ref{ex:q5}, the ideal $\Ic=\qf=\langle 1+\imath \rangle$ is inert in the extension $\QQ(\imath,\sqrt{5})/\QQ(\imath)$.  The $2$-adic valuation on $\QQ$ extends to the $(1+\imath)$-adic valuation on $\QQ(\imath)$ (as $2$ is totally ramified in the extension $\QQ(\imath)/\QQ$).  We may take the value group of  $\QQ(\imath)$ to be $\ZZ$ and the value of $(1+\imath)$ to be $1$ in this valuation.  The  $(1+\imath)$-adic valuation then extends uniquely to $\QQ(\imath,\sqrt{5})$, and the value group of $\QQ(\imath,\sqrt{5})$ is also $\ZZ$.  It follows that if the norm of $x \in \QQ(\imath,\sqrt{5})$ equals $(1+\imath)$, then $v(x \sigma(x))= v(x) + v(\sigma(x)) = 2v(x) = v(1+\imath) = 1$.  But this is an impossible equation in $\ZZ$.  Thus, $\Dc$ is indeed a division algebra.

We are in the situation covered by Proposition \ref{prop:inert_s_one_ubar_zero}.  We consider {$\Jc_1=\langle z \rangle = z\Lambda$. }

Using the above proposition, we have
\[
\Lambda/\Jc_1\Lambda \cong \FF_4.
\]
\end{example}

To further evaluate the bound (\ref{eq:dhbound}) given by 
\[
\Delta_{min} \geq \min\left( d_H(\bar{\Cc})^2 \min_{ x_i \not \in \Jc} |\det(X_i)|^2, \min_{0\neq x_i \in \Jc} |\det(X_i)|^2\right),
\]
where we recall that $X_i=M(x_i)$, note that for $x_i \in \Jc_j=z^j\Lambda$, $M(x_i)=M(z^j x_i')$, for some $x_i' \in{\LA}$. 
Then 
\[
M(z^j)=
\begin{bmatrix}
0 &    &          &  u   \\
1 & 0  &          &     \\
   &    &  \ddots  &     \\
   &    &         1 & 0  \\  
\end{bmatrix}^j ,~\det(M(z^j))=u^j,
\]
so % {taking $j=1$, we find in our situation}
\begin{equation}%\label{eq:detminu}
\Delta_{min} \geq \min_{0\neq x_i} |\det(X_i)|^2 \min\left( d_H(\bar{\Cc})^2, |u|^{2j} \right).
\end{equation}
In particular, taking $j=1$, we find for the codes constructed from $\Jc_1$ (where $\Lambda/\Jc_1\Lambda \cong \FF_4$),
\begin{equation}\label{eq:detminu}
\Delta_{min} \geq \min_{0\neq x_i} |\det(X_i)|^2 \min\left( d_H(\bar{\Cc})^2, |u|^{2} \right).
\end{equation}

\begin{example}
We continue with Example \ref{ex:q52}, and construct a space-time codeword $X=(X_1,X_2,X_3)$. Since $|u|^2=|1+\imath|^2=2$, it is enough to have $d_H(\bar{\Cc})=2$, so a single parity code can be used again, this time over $\FF_4$, and the codeword
\[
(\bar{x}_{1,0},\bar{x}_{2,0},\bar{x}_{1,0}+\bar{x}_{2,0}) \in \FF_4^3
\]
can be lifted to
\[
\left(
\begin{bmatrix}
x_{1,0} & (1+\imath)\sigma(x_{1,1}) \\ 
x_{1,1} & \sigma(x_{1,0})
\end{bmatrix},~
\begin{bmatrix}
x_{2,0} & (1+\imath)\sigma(x_{2,1}) \\ 
x_{2,1} & \sigma(x_{2,0})
\end{bmatrix},~
\begin{bmatrix}
x_{1,0}+x_{2,0} & (1+\imath)\sigma(x_{3,1}) \\ 
x_{3,1} & \sigma(x_{1,0}+x_{2,0})
\end{bmatrix}\right).
\]
From (\ref{eq:detminu}), a lower bound on $\Delta_{min}$ is 
\[
\Delta_{min}\geq  2\min_{0\neq x_i}|\det(X_i)|^2.
\]
This lower bound is actually achieved by  the codeword 
\[
\left(
\begin{bmatrix}
0 & 0 \\
0 & 0 \\
\end{bmatrix},~
\begin{bmatrix}
0 & 0 \\
0 & 0 \\
\end{bmatrix},~
\begin{bmatrix}
0 & (1+\imath)) \\ 1 & 0
\end{bmatrix}\right).\]

Indeed, for any $X = M(x)$, where $x\in \LA$, the minimum of $|\det(X)|$ is bounded below by $1$ since this determinant lies in $\ZZ[
\imath]$. Morever, the element $x=1$ of $\LA$ actually achieves this minimum.  It is clear that the codeword above has determinant $2$, the least possible.
\end{example}

%*********************************************************************************************%
%
%
%
%**********************************************************************************************%
\section{The Inertial Case: $\Ic = \qf^s$ ($s > 1$), $g=1$, $e=1$, $f=n$}
\label{sec:inert_power_of_prime}

We consider now a modification of the situation considered in Section \ref{sec:inert}: we assume that the ideal $\Ic$ of $\Oc_F$ is a \textit{power} of a prime ideal.  Thus, we assume that $\Ic = \qf^s$ for some prime ideal $\qf$ of $\Oc_F$ and some integer $s > 1$.  We continue to assume that $\qf$ stays prime in $\Oc_K$, that is, $g=1$, $e=1$, and $f=n$.  As in the previous section, we write $\Qf$ for the prime ideal $\qf \Oc_K$.

The ring $\Lambda/\Ic\Lambda$ is described by
\[
\Lambda/\Ic\Lambda \cong \oplus_{j=0}^{n-1} (\MCO_K/\qf^s\MCO_K)z^j \cong \oplus_{j=0}^{n-1} (\MCO_K/\Qf^s)z^j,
\]   
with $z (k + \Qf^s) = (\sigma(k)+\Qf^s)z$ and $z^n=u+\qf^s$.

We will study in this section the case where $u\not\in \qf$. 
We shall prove the following, which is a generalization of Proposition \ref{prop:f=n}:

\begin{proposition} \label{prop:inert_power_of_prime_u_not_in_q} In the situation described above ($\Ic = \qf^s$, $s>1$, $\qf$ remains prime in $\Oc_K$, $u\not\in\qf$), 
$\Lambda/\Ic\Lambda \cong \Mc_n(\Oc_F/\qf^s)$.   The two-sided proper ideals $\Jc$ of $\Lambda$ such that $\Jc$ contains $\qf^s\Lambda$ are the ideals $\Jc_t = \qf^t\Lambda = \oplus_{j=0}^{n-1} \qf^t\Oc_K z^j$, for $1 \le t \le s$, and their quotients satisfy $\Lambda/\Jc_t \cong \Mc_n(\Oc_F/\qf^t)$ .  
\end{proposition}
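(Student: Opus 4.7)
The plan is to reduce the structure of $A := \Lambda/\qf^s\Lambda$ to the residue-field case of Proposition \ref{prop:f=n} by viewing $A$ as a generalized cyclic algebra over the local artinian ring $R := \Oc_F/\qf^s$, and then lifting the matrix-ring decomposition from the residue mod $\mathfrak{m} := \qf/\qf^s$ up to $A$, using the fact that $\mathfrak{m}A$ is nilpotent. First I would observe that $\bar u$ is a unit in $R$ (since $u\notin\qf$), that $R$ is local with maximal ideal $\mathfrak{m}$, and that $A/\mathfrak{m}A$ is precisely the cyclic algebra $(\overline{K}/\overline{F},\sigmabar,\bar u)$, which by Proposition \ref{prop:f=n} is isomorphic to $\Mc_n(\overline{F})$. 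This gives a surjection $A\twoheadrightarrow \Mc_n(\overline{F})$ whose kernel $\mathfrak{m}A$ satisfies $(\mathfrak{m}A)^s=0$.

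Next I would lift the matrix structure. Because $\mathfrak{m}A$ is a nilpotent two-sided ideal, the classical idempotent-lifting theorem lifts the diagonal idempotents $\overline{e}_{11},\dots,\overline{e}_{nn}$ of $\Mc_n(\overline{F})$ to a system of orthogonal idempotents $E_{11},\dots,E_{nn}\in A$ summing to $1$. A standard refinement (see, e.g., Lam, \emph{A First Course in Noncommutative Rings}) then lifts the off-diagonal matrix units $\overline{e}_{ij}$ to elements $E_{ij}\in A$ satisfying the full matrix-unit relations $E_{ij}E_{kl}=\delta_{jk}E_{il}$, yielding an isomorphism $A\cong \Mc_n(B)$, where $B:=E_{11}AE_{11}$.

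To identify $B$ with $R$ I would perform a rank count: since $\qf$ is unramified in $K$, the localization $(\Oc_K)_\qf$ is free of rank $n$ over the DVR $(\Oc_F)_\qf$, so $\Oc_K/\qf^s\Oc_K$ is free of rank $n$ over $R$, and $A=\oplus_{j=0}^{n-1}(\Oc_K/\qf^s\Oc_K)z^j$ is $R$-free of rank $n^2$. Comparing cardinalities in $A\cong \Mc_n(B)$ forces $|B|=|R|$. Since $R$ sits in the center of $A$, the map $R\to B$ sending $r\mapsto rE_{11}$ is a ring homomorphism; it is injective because $rE_{11}=0$ would give $rE_{ii}=E_{i1}(rE_{11})E_{1i}=0$ for every $i$, hence $r=r\sum_i E_{ii}=0$. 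An injection between finite rings of equal cardinality is an isomorphism, so $B\cong R$ and $A\cong \Mc_n(R)=\Mc_n(\Oc_F/\qf^s)$.

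For the ideal statement I would invoke the standard correspondence: the two-sided ideals of $\Mc_n(R)$ are precisely $\Mc_n(I)$ for ideals $I$ of $R$, and the ideals of $R=\Oc_F/\qf^s$ are the $\qf^t/\qf^s$ with $0\le t\le s$. Since $R\subseteq Z(A)$, the isomorphism $A\cong \Mc_n(R)$ sends $\qf^t A = (\qf^t\Lambda)/(\qf^s\Lambda)$ to $\Mc_n(\qf^t/\qf^s)$, so the proper ideals of $\Lambda$ containing $\qf^s\Lambda$ are exactly the $\Jc_t:=\qf^t\Lambda$ for $1\le t\le s$, and applying the first part of the proposition with $s$ replaced by $t$ gives $\Lambda/\Jc_t\cong \Mc_n(\Oc_F/\qf^t)$. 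I expect the main obstacle to be the matrix-unit lifting step: lifting idempotents across a nilpotent ideal is routine, but producing a compatible full system of matrix units is the slightly technical Azumaya-style argument that carries most of the content.
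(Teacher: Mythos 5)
Your proof is correct, but it takes a genuinely different route from the paper's. The paper passes to the completion: it forms $\widehat{\Lambda} = \oplus_{j}\Vc_{\widehat{K}}z^j$ over the complete discrete valuation ring $\Vc_{\widehat{F}}$, observes that this is an Azumaya algebra, invokes the triviality of the Brauer group of $\Vc_{\widehat{F}}$ (finite residue field plus the Jacob--Wadsworth isomorphisms) together with freeness of projectives over a local ring to get $\widehat{\Lambda}\cong \Mc_n(\Vc_{\widehat{F}})$, and then reduces modulo $\qf^s$ via $\Oc_K/\Qf^s\cong \Vc_{\widehat{K}}/\Qf^s$. You never leave the finite ring $A=\Lambda/\qf^s\Lambda$: you reduce modulo the nilpotent ideal $\mathfrak{m}A$ to land in Proposition \ref{prop:f=n}, lift a full set of matrix units across that nilpotent ideal (this is indeed the standard nil-ideal lifting result you cite, and, as you say, it is the finite-level shadow of the Azumaya argument), and identify the corner ring $E_{11}AE_{11}$ with $R=\Oc_F/\qf^s$ by centrality plus a cardinality count; your count $|A|=|R|^{n^2}$ is fine, and in the inert case it follows even more directly from $|\Oc_K/\Qf^s|=|\overline{F}|^{ns}$, so unramifiedness is not really needed for the freeness step. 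What the paper's route buys is the stronger statement $\widehat{\Lambda}\cong\Mc_n(\Vc_{\widehat{F}})$, uniform in $s$, which it reuses verbatim in the split case of Section \ref{sec:noram_power_of_a_prime}; what your route buys is an elementary, self-contained argument about finite rings (no completions, valuation theory, or Brauer groups), and it transfers to the split case just as easily, since it only uses $\Lambda/\qf\Lambda\cong\Mc_n(\overline{F})$ (Proposition \ref{prop:splitting_I_is_just_q}) plus the same rank count. Your handling of the ideal statement matches the paper's; one small point worth making explicit is that the matrix-unit isomorphism $A\cong\Mc_n(B)$, $a\mapsto (E_{1i}aE_{j1})_{i,j}$, sends a central $r$ to $\mathrm{diag}(rE_{11},\dots,rE_{11})$, so after identifying $B$ with $R$ it is an $R$-algebra isomorphism and really does carry $\qf^tA$ onto $\Mc_n(\qf^t/\qf^s)$ as claimed (alternatively, just note that $\Mc_n(R)$ has exactly $s+1$ two-sided ideals and that the $\qf^tA$, $0\le t\le s$, are $s+1$ distinct ideals, so they exhaust them).
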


\begin{proof} Write $\widehat{F}$ and $\widehat{K}$ for the completions of $F$ and $K$ (respectively) at the primes $\qf$ and $\Qf$.  Then $\widehat{K}/\widehat{F}$ is cyclic, with Galois group (abusing notation) $\langle \sigma \rangle$. (Here $\widehat{K} \cong K \otimes_F \widehat{F}$, and (abusing notation) $\sigma$ acts on $\widehat{K}$ by acting trivially on $\widehat{F}$ and as $\sigma$ on $K$. All this is standard, and can be found, for instance, in the discussions in \cite[Chapter II, \S 3, \S 5, or Chapter III, \S 1]{Jan}.)  Write $\Vc_{\widehat{K}}$ and $\Vc_{\widehat{F}}$ respectively for the valuation rings of ${\widehat{K}}$ and ${\widehat{F}}$ under the $\qf$-adic valuation. 
The assumption $u\not \in \qf$ translates to the $\qf$-adic value of $u$ being zero, which then makes it a unit in $\Vc_{\widehat{F}}$.
The ring $\widehat{\Lambda} = \oplus_{j=0}^{n-1}\Vc_{\widehat{K}} z^j$, with relations $z x = \sigma(x) z$ for $x\in \Vc_{\widehat{K}}$, $z^n = u$ is then an Azumaya algebra over $\Vc_{\widehat{F}}$ (\cite[Example 2.4 (i)]{JW}). But Azumaya algebras over $\Vc_{\widehat{F}}$ are trivial (this follows from the fact that the Brauer group of $\bar{F}$ is zero since it is a finite field, and from the isomorphisms described in \cite[Theorem 2.8]{JW}). $\widehat{\LA}$ is hence isomorphic to the endomorphism ring of a finitely generated projective module over $\Vc_{\widehat{F}}$.  But $\Vc_{\widehat{F}}$ is a local ring, so projective modules over it are free, so $\widehat{\Lambda} \cong \Mc_n(\Vc_{\widehat{F}})$. It follows immediately that $\widehat{\Lambda}/\qf^s\widehat{\Lambda} \cong \Mc_n(\Vc_{\widehat{F}}/\qf^s)$.

Now write $\Vc_K$   for the valuation ring of $K$ with respect to the $\Qf$-adic valuation, and $\Vc_F$ for the valuation ring of $F$ with respect to the $\qf$-adic valuation.  (Thus, $\Vc_K$ is the localization of $\Oc_K$ at the prime ideal $\Qf$ and $\Vc_F$ is the localization of $\Oc_F$ at the prime ideal $\qf$.)  
We have  the relations $\Oc_K/\Qf^s \cong \Vc_K/\Qf^s \cong \Vc_{\widehat{K}}/\Qf^s$ and $\Oc_F/\qf^s\cong  \Vc_F/\qf^s \cong \Vc_{\widehat{F}}/\qf^s$ (see \cite[Chapter I, Lemma 3.1]{Jan} and the proof of  \cite[Chapter II, Theorem 3.8]{Jan} for instance).   Since $\widehat{\Lambda}/\qf^s\widehat{\Lambda} \cong \oplus_{j=0}^{n-1}(\Vc_{\widehat{K}}/\Qf^s) z^j$, with relations $z (x +\Qf^s) = (\sigma(x) + \Qf^s) z$ for $x\in \Vc_{\widehat{K}}/\Qf^s$ and $z^n = u+\qf^s$, we find $\widehat{\Lambda}/\qf^s\widehat{\Lambda} \cong \oplus_{j=0}^{n-1}(\Oc_{{K}}/\Qf^s) z^j$, with relations $z (x +\Qf^s) = (\sigma(x) + \Qf^s) z$ for $x\in \Oc_{K}/\Qf^s$ and $z^n = u+\qf^s$.  But this ring is just $\Lambda/\qf^s\Lambda$.  Thus, we find $\Lambda/\qf^s\Lambda \cong \Mc_n(\Vc_{\widehat{F}}/\qf^s) \cong \Mc_n(\Oc_{{F}}/\qf^s)$.

For the statements about the ideals $\Jc$, note that the ideals of $\Mc_n(\Oc_{{F}}/\qf^s)$ are the sets $\Mc_n(I)=I \cdot \Mc_n(\Oc_{{F}}/\qf^s)$, where $I$ is an ideal of $\Oc_{{F}}/\qf^s$. The proper ideals of $\Oc_F/\qf^s$ are the ideals $\qf^t/\qf^s$, $1 \le t \le s$. It follows that the proper ideals of $\Lambda/\qf^s\LA \cong \Mc_n(\Oc_{{F}}/\qf^s)$ are $(\qf^t/\qf^s) \Lambda/\qf^s\LA = \qf^t\LA /\qf^s\LA$. Since the proper ideals $\Jc$ of $\Lambda$ that contain $\qf^s$ correspond to the proper ideals of $\Lambda/\qf^s\Lambda \cong \Mc_n(\Oc_{{F}}/\qf^s)$,  we see that the ideals $\Jc$ of $\Lambda$ that contain $\qf^s$ are precisely the ideals $\Jc_t$ of the statement of the proposition, and the quotient $\Lambda/\Jc_t \cong \Mc_n(\Oc_F/\qf^t)$.

\end{proof}

\begin{example}
Consider again the cyclic division algebra $\Dc=(\QQ(\imath,\sqrt{5})/\QQ(\imath),\sigma,\imath)$, with 
\[
\Lambda = \ZZ[\imath,(1+\sqrt{5})/2]\oplus\ZZ[\imath,(1+\sqrt{5})/2]z,~z^2=\imath
\] 
of Example \ref{ex:q5}, and take $\Ic = (1+\imath)^2$.  Since $\Oc_F/\langle 1+\imath\rangle^2 \cong \ZZ[\imath]/\langle1+\imath\rangle^2 \cong\ZZ[x]/\langle x^2+1, (x+1)^2 \rangle \cong \FF_2[x]/\langle x^2+1\rangle \cong\FF_2[\imath]$ (where  the last but one isomorphism arises because the ideal $\langle x^2+1, (x+1)^2 \rangle$ also contains $2$ as can be readily seen), we get that
\[
\Lambda/\Ic\Lambda \simeq \Mc_2(\FF_2[i]).
\]
\end{example}

More generally, from a coding perspective, being able to consider quotients of $\qf^s$, $s>1$ is of interest, since it increases the lower bound on the minimum determinant. This can be easily seen for example if $\qf=(\alpha)$ is principal, since then, similarly as shown in (\ref{eq:minprin}), we have 
\begin{equation}\label{eq:minprins}
\Delta_{min} \geq \min_{0\neq X_i} |\det(X_i)|^2 \min\left( d_H(\bar{\Cc})^2 , |\alpha|^{2sn}\right),
\end{equation}
where the minimum Hamming distance of $\bar{\Cc}$ can be increased, especially when the length $L$ of the codeword is large, however $|\alpha|^{2sn}$ is fixed once $s$ and $n$ are given. Thus, for a chosen $n$, having the freedom to increase $s$ provides coding benefit in terms of the minimum determinant of $\Cc$.

%*********************************************************************************************%
%
%
%
%**********************************************************************************************%

\section{The Split Case: $\Ic=\qf$, $g>1$, $e=1$, $f=n/g$}
\label{sec:noram}

We consider now the case where $\Ic=\qf$ factors as $\qf\Oc_K=\Qf_1\Qf_2\cdots \Qf_g$ in $\MCO_K$ for $g>1$, so that $f = n/g$ for each extension $\Qf_i$.  Writing $\overline{K}$ for $\Oc_K/\qf\Oc_K$ and $\Kibar$ for $\MCO_K/\Qf_i $, we get by the Chinese Remainder theorem {and the comaximality of the $\Qf_i$} that 
\begin{eqnarray*} \overline{K} &\cong& \K1bar\times\cdots \times \Krbar\\
k+ \qf\MCO_K &\mapsto& (k+\Qf_1, \cdots, k+\Qf_g).
\end{eqnarray*}

{We use this isomorphism to transfer the action of $G$ on $\overline{K}$ to an action on $\K1bar\times\cdots \times \Krbar$. Thus, if the preimage of $(k_1+\Qf_1, \dots, k_g + \Qf_g)$ is some $k+\qf\MCO_K$, then $\sigma(k_1+\Qf_1, \dots, k_g + \Qf_g)$ is defined to be the image of $\sigma(k+\qf\MCO_K)$, i.e, the element $(\sigma(k) +\Qf_1, \dots, \sigma(k)+\Qf_g)$.  }

{Note that since $\Qf_i\cap \MCO_F = \qf$ for all $i$, $\Fbar = \MCO_F/\qf$ sits inside each $\Kibar$ via $f+\qf \mapsto f + \Qf_i$, and hence inside the direct product via $f+\qf \mapsto (f + \Qf_1,\dots, f+ \Qf_g)$. 
We have that
\begin{equation}\label{eq:Lambda}
\Lambda/\Ic\Lambda \cong \oplus_{j=0}^{n-1} (\MCO_K/\qf\MCO_K)z^j \cong \oplus_{j=0}^{n-1} (\K1bar\times\cdots \times \Krbar)z^i
\end{equation}
where $z(k_1+\Qf_1, \dots, k_g + \Qf_g) = \sigma(k_1+\Qf_1, \dots, k_g + \Qf_g)z$ and {$z^n = (u+\qf, \dots, u+\qf)$} 
}

Let us discuss {the Galois} action further. 

\begin{lemma}
Using the above notations, we have that
\begin{enumerate}
\item
$\Kibar/\Fbar$ is a cyclic finite field extension with Galois group $\langle \sigma^g \rangle$, 
and $\K1bar \cong \Kibar$ (for all $i=2,\dots, g$).
\item
After reordering the primes  $\Qf_1,\ldots,\Qf_g$ if necessary, the action of $\sigma^j$ on $(k,0,\ldots,0)$ in $\K1bar\times\cdots \times \Krbar$ is to send it to $(0,\ldots, \sigma^j(k),\ldots,0)$, where $\sigma^j(k)$ is in the $j+1$ position, and where the position is understood modulo $g$.
\end{enumerate}
\end{lemma}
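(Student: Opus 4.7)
The plan is to analyze the decomposition groups of the primes $\Qf_i$ inside $G = \langle \sigma \rangle$ and then push the answer through the Chinese Remainder isomorphism.

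For part (1), since $K/F$ is Galois, $G$ acts transitively on $\{\Qf_1, \dots, \Qf_g\}$ with a common ramification index $e = 1$ and common residue degree $f = n/g$. The decomposition group $D_i = \{\tau \in G : \tau(\Qf_i) = \Qf_i\}$ then has order $ef = n/g$. Because $G$ is cyclic of order $n$, it contains a unique subgroup of order $n/g$, namely $\langle \sigma^g \rangle$, so $D_i = \langle \sigma^g \rangle$ for every $i$. The canonical surjection $D_i \twoheadrightarrow \text{Gal}(\Kibar/\Fbar)$ has kernel the inertia group, which is trivial since $e = 1$, yielding $\text{Gal}(\Kibar/\Fbar) \cong \langle \sigma^g \rangle$ as required. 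The isomorphism $\K1bar \cong \Kibar$ then comes from transitivity: any $\tau \in G$ with $\tau(\Qf_1) = \Qf_i$ descends to a ring isomorphism $\Oc_K/\Qf_1 \to \Oc_K/\Qf_i$.

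For part (2), since the stabilizer of any $\Qf_i$ under $\langle \sigma \rangle$ is $\langle \sigma^g \rangle$, each $\sigma$-orbit of primes has size $n/(n/g) = g$, so $\sigma$ permutes the $g$ primes as a single $g$-cycle. After relabeling we may assume $\sigma(\Qf_i) = \Qf_{i+1}$, indices read modulo $g$. Now pick $k + \Qf_1 \in \K1bar$ represented by $k \in \Oc_K$, and let $x \in \Oc_K$ be a CRT lift of $(k + \Qf_1, 0, \dots, 0)$, i.e.\ $x \equiv k \pmod{\Qf_1}$ and $x \equiv 0 \pmod{\Qf_i}$ for $i \neq 1$. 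Applying $\sigma^j$ and using $\sigma^j(\Qf_i) = \Qf_{i+j}$, one gets $\sigma^j(x) \equiv \sigma^j(k) \pmod{\Qf_{j+1}}$ and $\sigma^j(x) \equiv 0 \pmod{\Qf_i}$ whenever $i \not\equiv j+1 \pmod{g}$. Pushing this back through the CRT isomorphism gives the tuple with $\sigma^j(k) + \Qf_{j+1}$ in position $j+1 \pmod g$ and zeros elsewhere, as claimed.

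The only non-bookkeeping step is identifying the decomposition group $D_i$ with $\langle \sigma^g \rangle$ and deducing that $\sigma$ acts on the primes as a single $g$-cycle; this rests on the uniqueness of subgroups of prescribed order in a finite cyclic group together with standard $efg = n$ information. Once this is in hand, both parts reduce to routine chases through the CRT isomorphism and the ramification data, and no genuine obstacle remains.
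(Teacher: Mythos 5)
Your proposal is correct and follows essentially the same route as the paper: identify the stabilizer (decomposition group) of each $\Qf_i$ with $\langle\sigma^g\rangle$ via transitivity and the uniqueness of subgroups of a cyclic group, get $\mathrm{Gal}(\Kibar/\Fbar)\cong\langle\sigma^g\rangle$ from the standard surjection with trivial inertia (the paper simply cites this as standard), and verify the action of $\sigma^j$ by lifting $(k,0,\dots,0)$ through the CRT isomorphism after relabeling so that $\sigma(\Qf_i)=\Qf_{i+1}$. The only cosmetic difference is that the paper realizes $\K1bar\cong\Kibar$ explicitly via $\sigma^{i-1}$ after the reordering, noting it is an $\Fbar$-isomorphism, while you invoke an arbitrary $\tau$ carrying $\Qf_1$ to $\Qf_i$.
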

\begin{proof}
Recall that $G$ acts transitively on the set of prime ideals $\Qf_1, \dots, \Qf_g$.  Since the orbit of any $\Qf_i$ has size $g$, it follows that the stabilizer of $\Qf_i$ is of size $n/g$, and is thus the subgroup generated by $\sigma^g$. Hence, the subgroup generated by $\sigma^g$ yields an induced action on each $\Kibar=\MCO_K/\Qf_i$. 
{As mentioned above, $\Fbar = \MCO_F/\qf$ sits inside each $\Kibar$ and $\Kibar/\Fbar$ is in fact a cyclic finite field extension of degree $f=n/g$}, with Galois group isomorphic to the subgroup generated by $\sigma^g$. (All this is standard, and can be found, for instance, in the discussions in \cite[Chapter III, \S 1]{Jan}.)

Write $x$ for the preimage in $\Kbar$ of an element of the form $(k,0,\dots,0)$ of $\K1bar\times\cdots \times \Krbar$. Thus, $x\equiv k$ (mod $\Qf_1$), and $x\equiv 0$ (mod $\Qf_j$, $j\neq 1$).  It follows that $\sigma(x) \equiv \sigma(k)$ (mod $\sigma(\Qf_1)$), and $\sigma(x)\equiv 0$ (mod $\sigma(\Qf_j)$, $j\neq 1$), where $\sigma(\Qf_1)$ is some $\Qf_i$, $i\neq 1$. Arranging the $\Qf_i$ so that successively, $\sigma$ takes $\Qf_1$ to $\Qf_2$, $\Qf_2$ to $\Qf_3$ and so on until $\sigma^g$ takes $\Qf_1$ back to $\Qf_1$, we find that $\sigma$ takes $(k,0,\dots, 0)$ to $(0,\sigma(k),0,\dots,0)$ and then to $(0,0, {\sigma^2}(k),0,\dots,0)$ and so on, until $\sigma^g$ takes $(k,0,\dots, 0)$ to $(\sigma^g(k),0,\dots, 0)$. Another application of $\sigma$ sends this element to $(0,\sigma^{g+1}(k),0,\dots,0)$,  and so on, until $\sigma^{2g}$ takes $(k,0,\dots, 0)$ to $(\sigma^{2g}(k),0,\dots, 0)$, etc.

{After this reordering, $\sigma^{i-1}$ sends $\Qf_1$ onto $\Qf_i$, and hence induces an isomorphism $\K1bar\cong \MCO_K/\Qf_i \cong \MCO_K/\Qf_i\cong\Kibar$ (for all $i=2,\dots, g$).  This is an $\Fbar$-isomorphism.}

\end{proof}

We have again two cases to consider: $\ubar = u+\qf \neq 0$ and $\ubar = 0$. We start with the former.

\subsection{The case $\ubar \neq 0$}

\begin{proposition}  \label{prop:splitting_I_is_just_q}
Suppose that $\Ic=\qf$ is a prime in $\Oc_F$, such that $\qf\Oc_K = \Qf_1\Qf_2\cdots \Qf_g$ in $K$, with $\ubar \neq 0$ in $\Fbar$. Then $\Lambda/\Ic\Lambda \cong \Mc_n(\Fbar)$.
The only proper two-sided ideal $\Jc$ of $\Lambda$ containing $\Ic$ is $\Ic\Lambda=\oplus_{j=0}^{n-1}\qf\Oc_Kz^j$.
\end{proposition}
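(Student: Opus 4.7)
My plan is to exploit the idempotent decomposition of $\bar{K} = \K1bar\times\cdots\times\Krbar$ to reduce the structure of $R := \LA/\Ic\LA$ to a \emph{classical} cyclic algebra over the finite field $\Fbar$, and then invoke Wedderburn's two theorems exactly as in the proof of Proposition \ref{prop:f=n}. More precisely, let $e_i \in \bar{K}$ denote the primitive idempotent picking out the $i$-th factor $\Kibar$. The preceding lemma tells us that $\sigma(e_i) = e_{i+1}$ (indices mod $g$), which translates inside $R$ to the commutation rule $ze_i = e_{i+1}z$. The plan is to study the corner ring $e_1Re_1$ and then apply Morita equivalence.

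First I would compute $e_1Re_1$. Using $z^je_1 = e_{j+1 \bmod g}z^j$, we see that $e_1z^je_1 = e_1 e_{j+1 \bmod g}z^j$ vanishes unless $g\mid j$, and that $e_1$ commutes with $z^{mg}$. It follows that
\[
e_1Re_1 \;=\; \bigoplus_{m=0}^{f-1} \K1bar\,(e_1 z^g)^m,
\]
with $(e_1z^g)\bar{k} = \sigma^g(\bar{k})(e_1z^g)$ for $\bar{k}\in\K1bar$ and $(e_1z^g)^f = e_1 \bar{u}$. Since $\sigma^g$ generates $\mathrm{Gal}(\K1bar/\Fbar)$ (by the lemma) and $\bar{u}\in\Fbar^\times$, this exhibits $e_1Re_1$ as the classical cyclic algebra $(\K1bar/\Fbar, \sigma^g, \bar{u})$ of degree $f = n/g$. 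Exactly as in Proposition \ref{prop:f=n}, Wedderburn's theorems give $e_1Re_1 \cong \Mc_f(\Fbar)$.

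Next I would show that $e_1$ is a \emph{full} idempotent, i.e.\ $Re_1R = R$. For each $j$ we compute $z^j e_1 z^{n-j} = e_{j+1 \bmod g}z^n = \bar{u}\,e_{j+1 \bmod g}$, and since $\bar{u}$ is a unit in $\Fbar$, each $e_i$ lies in $Re_1R$. Summing gives $1\in Re_1R$. By Morita equivalence, $R$ is Morita equivalent to $e_1Re_1 \cong \Mc_f(\Fbar)$, hence to $\Fbar$ itself, so $R \cong \Mc_m(\Fbar)$ for some $m$. A dimension count over $\Fbar$ gives $m^2 = \dim_{\Fbar} R = n\cdot\dim_{\Fbar}\bar{K} = n^2$, so $m=n$ and $R \cong \Mc_n(\Fbar)$.

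Finally, for the ideal statement I would use that $\Mc_n(\Fbar)$ is a simple ring, so it has no proper nonzero two-sided ideals. By the correspondence between two-sided ideals of $\LA$ containing $\Ic\LA$ and two-sided ideals of $\LA/\Ic\LA \cong \Mc_n(\Fbar)$ (already invoked in Proposition \ref{prop:f=n}), the only proper two-sided ideal $\Jc$ of $\LA$ containing $\Ic$ is $\Ic\LA = \oplus_{j=0}^{n-1}\qf\Oc_K z^j$. The main obstacle I anticipate is bookkeeping: verifying that the cyclic permutation of the $e_i$ by $\sigma$ folds correctly into the corner ring so that $(e_1z^g)^f$ really equals $\bar{u}e_1$ and the induced action on $\K1bar$ really is $\sigma^g$; once that is verified cleanly, the rest is standard Morita theory plus Wedderburn.
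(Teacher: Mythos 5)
Your proposal is correct, and it takes a genuinely different route from the paper's. The paper first normalizes the twist: using surjectivity of the norm map $N_{\overline{K}^{(1)}/\Fbar}$ for finite fields it replaces $z$ by $y=wz$ with $y^n=1$ (this is where $\ubar\neq 0$ enters there), then embeds $\LA/\Ic\LA$ into $\mathrm{End}_{\Fbar}(V)$ with $V=\overline{K}^{(1)}\times\cdots\times\Krbar$ via $x\mapsto\lambda_x$, $y\mapsto T$ (the $\Fbar$-linear map induced by $\sigma$), proves injectivity of this map by evaluating on tuples supported in a single slot and using the nonvanishing discriminant of a basis of $\overline{K}^{(1)}/\Fbar$ (a Dedekind-independence style computation), and finishes by a dimension count. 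You instead pass to the corner ring at the idempotent $e_1$, identify $e_1Re_1$ as the classical degree-$f$ cyclic algebra $(\overline{K}^{(1)}/\Fbar,\sigma^g,\ubar)\cong\Mc_f(\Fbar)$ (reusing exactly the Wedderburn argument of Proposition \ref{prop:f=n}, and the lemma's statement that $\sigma^g$ generates $\mathrm{Gal}(\overline{K}^{(1)}/\Fbar)$), prove $e_1$ is full from $z^je_1z^{n-j}=\ubar\,e_{j+1}$ and invertibility of $\ubar$, and conclude by Morita theory (a ring Morita equivalent to the field $\Fbar$ is a matrix ring over it) plus the count $\dim_{\Fbar}\LA/\Ic\LA=n^2$; the bookkeeping you flagged does check out, since $e_1z^je_1=0$ unless $g\mid j$ and $(e_1z^g)^f=e_1z^n=\ubar\,e_1$. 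The hypothesis $\ubar\neq 0$ is used in parallel places (norm equation there; fullness and the unit in the corner cyclic algebra here), and the final step via simplicity of $\Mc_n(\Fbar)$ is identical. What each approach buys: yours is shorter and more structural, reducing the split case to the inert case and making the role of $\ubar$ transparent, at the price of invoking Morita machinery; the paper's is more self-contained and explicit, producing a concrete matrix realization of the quotient inside $\mathrm{End}_{\Fbar}(\overline{K})$ (useful when one actually wants to write down codewords), at the price of the norm normalization and a more intricate linear-independence verification.
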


\begin{proof} 
Recall from (\ref{eq:Lambda}) that 
$$\Lambda/\Ic\Lambda \cong \oplus_{j=0}^{n-1} (\K1bar\times\cdots \times \Krbar)z^j$$ 
where $z(k_1+\Qf_1, \cdots, k_g + \Qf_g) = \sigma(k_1+\Qf_1, \dots, k_g + \Qf_g)z$, 
{$z^n = (u+\qf, \dots, u+\qf)$}, and where $\sigma$ acts on $\K1bar\times\cdots \times \Krbar$ as described.  

We claim first that $$\Lambda/\Ic\Lambda \cong \oplus_{j=0}^{n-1} (\K1bar\times\cdots \times \Krbar)y^j$$ where $y(k_1+\Qf_1, \cdots, k_g + \Qf_g) = \sigma(k_1+\Qf_1, \dots, k_g + \Qf_g)y$, and importantly, {$y^n = (1+\qf, \dots, 1+\qf)$}, i.e., $y^n = 1_{\Fbar}$.  To see this, note that the norm map $N_{\K1bar/\Fbar}$ from $\K1bar$ to $\Fbar$ is surjective since this is an extension of finite fields. Hence, given that $u+\qf \neq 0_{\Fbar}$, {it is invertible, and recalling from the above lemma the Galois group of $\K1bar/\Fbar$}, there exists $k+\Qf_1\in \K1bar$ such that 
$$N_{\K1bar/\Fbar}(k+\Qf_1) =(k+\Qf_1)\sigma^g(k+\Qf_1)\dots \sigma^{g(f-1)}(k+\Qf_1) = u^{-1}+\qf.$$
In particular, this means that 
$$k\sigma^g(k)\cdots \sigma^{g(f-1)}(k) \equiv u^{-1} (\text{mod}\ \Qf_1)$$ so applying $\sigma^{i-1}$ for any $i$ and noting that $u^{-1}\in\Oc_F$ is fixed by $\sigma$, we find 
$$\sigma^{i-1}\left(k\sigma^g(k)\cdots \sigma^{g(f-1)}(k) \right) \equiv u^{-1} (\text{mod}\ \Qf_i).$$
 Now consider the element $y = wz$, where $w = (k+\Qf_1, 1+\Qf_2,\dots,1+\Qf_g)$.  It is clear that $yx = \sigma(x)y$ for all $x\in \K1bar\times\cdots \times \Krbar$. Moreover, $y^n = w \sigma(w) \cdots \sigma^{g-1}(w) \sigma^g(w) \cdots \sigma^{n-1}(w) (u+\qf)$. Now, {recalling the action of $\sigma$ from the previous lemma}, we have
\begin{eqnarray*}  
\sigma(w) &=& (1+\Qf_1, \sigma(k)+\Qf_2,\dots,1+\Qf_g),\\ 
\sigma^{g-1}(w) &=& (1+\Qf_1, 1+\Qf_2,\dots,\sigma^{g-1}(k)+\Qf_g),\\ 
\sigma^g(w)&=& (\sigma^{g}(k)+\Qf_1, 1+\Qf_2,\dots,1+\Qf_g),
\end{eqnarray*}
 and so on. Multiplying, we find 
{
\begin{eqnarray*}
y^n &=&(N_{\K1bar/\Fbar}(k+\Qf_1),\sigma(N_{\K1bar/\Fbar}(k+\Qf_2)),\dots,\sigma^{g-1}(N_{\K1bar/\Fbar}(k+\Qf_g)))(u+\qf),\\
    &=&(u^{-1}+\qf, \dots,u^{-1}+\qf ) (u+\qf) = (u^{-1}+\qf)(u+\qf) = 1_{\Fbar}
\end{eqnarray*}
using the above relations.}
Since $y^i = w\sigma(w)\cdots\sigma^{i-1}(w)z^i$, the $\Fbar$-subspace $(\K1bar\times\cdots \times \Krbar) z^i$ equals $(\K1bar\times\cdots \times \Krbar) y^i$. It is clear now that using $y$ instead of $z$, we may write $\Lambda/\Ic\Lambda \cong \oplus_{j=0}^{n-1} (\K1bar\times\cdots \times \Krbar)y^j$.

Now $V=\K1bar\times\cdots \times \Krbar$ is a { $gf = n$} dimensional space over $\Fbar$, and $\K1bar\times\cdots \times \Krbar$ embeds $\Fbar$ isomorphically into $End_{\Fbar}(V)$ via left multiplication: $x\in \K1bar\times\cdots \times \Krbar \mapsto \lambda_x$.  Note that the action of $\sigma$ on $\K1bar\times\cdots \times \Krbar$ is $\Fbar$-linear, so there is an element $T  \in End_{\Fbar}(V)$ corresponding to $\sigma$.  Furthermore
\[
T\lambda_x(y) = \sigma(xy) = \sigma(x)\sigma(y),
\]
so 
\[
T\lambda_x = \lambda_{\sigma(x)}{T} 
\]
and $T^n = 1$.  It follows that the map
{
\begin{eqnarray*} 
\Lambda/\Ic\Lambda&\rightarrow& End_{\Fbar}(V)\\
\sum_{j=0}^{n-1} x_jy^j &\mapsto& \sum_{j=0}^{n-1} \lambda_{x_j}T^j
\end{eqnarray*} } is a well defined ring homomorphism. {By count of $\overline{F}$-dimensions, it is enough to show} that it is injective to prove that $\Lambda/\Ic\Lambda \cong End_{\Fbar}(V)$, and since $End_{\Fbar}(V) \cong \Mc_n(\Fbar)$, the proposition will be proved.

To this end, suppose that $\sum_{j=0}^{n-1} {\lambda_{x_j}T^j = 0}$ in $End_{\Fbar}(V)$.  Abusing notation, let us drop the prefix $\lambda$ and write just {$x_j$ for $\lambda_{x_j}$.  Each $x_j$ is of the form $(x_{j,1},\dots,x_{j,g}) \in \K1bar\times\cdots \times \Krbar$.}  We study the action of $x_j T^j$ on a typical element $(k_1,\dots, k_g)$ of $\K1bar\times\cdots \times \Krbar$.  Writing $j = mg+b$ for $0 \le b < g$ and $0 \le m \le f-1$, we find that $x_j T^{mg+b}$ sends $(k_1,\dots, k_g)$ to $x_j(\sigma^{gm}(k_{g-b+1}), \sigma^{gm}(k_{g-b+2}), \dots, \sigma^{gm}(k_{g-b+g}))$, where the subscripts are taken modulo $g$ so as to lie in $\{1,2,\dots, g\}$.  Examining the entry in the first slot, for instance, of {$\sum_{j=0}^{n-1}x_j T^j$} acting on  $(k_1,\dots, k_g)$, we find it to equal
\begin{eqnarray} 
&(\text{$b=0$ terms})& x_{0,1} k_1 + x_{g,1}\sigma^g(k_1) + \dots + x_{(f-1)g,1}\sigma^{(f-1)g}(k_1) +  
\nonumber\\
 &(\text{$b=1$ terms })&x_{1,1} \sigma(k_g) + x_{g+1,1}\sigma^g(\sigma(k_g))  +\dots + x_{(f-1)g+1,1}\sigma^{(f-1)g}(\sigma(k_g))  \nonumber\\
&&  +\ldots + \nonumber\\
&(\text{$b=g-1$ terms})& x_{g-1,1} \sigma^{g-1}(k_2) + x_{g+g-1,1}\sigma^g(\sigma^{g-1}(k_2)) \nonumber \\
&&+ \dots + x_{(f-1)g+g-1,1}\sigma^{(f-1)g}(\sigma^{g-1}(k_2)). \nonumber\\
&& \label{eqn_for_x}
\end{eqnarray} 
The expression above, which involves the $fg = n$ variables $x_{i,1}$, $i=0,\dots, n-1$, should equal $0$. Similarly considering the entries in the various $j$-th slots ($j=2,\dots, g$) {of $\sum_{j=0}^{n-1}x_j T^j$} acting on  $(k_1,\dots, k_g)$, we get $g-1$ equations involving the remaining $x_{i,j}$.

Now let us successively choose $k=(a_1,0,\dots, 0)$, $k=(a_2, 0, \dots, 0)$, and so on, {up} to $k= (a_f,0,\dots, 0)$, where the $a_i$ form an $\Fbar$-basis for $\K1bar$.  Then the terms in (\ref{eqn_for_x}) coming from rows other than that corresponding to $b=0$ all drop out, and we get the following $f$ equations for the variables $x_{0,1}$, $x_{g,1}$, $\dots$, $x_{(f-1)g,1}$:
\begin{equation} \label{disc_matrix}
\left(\begin{array}{ccccc}
a_1 & \sigma^g(a_1) & \sigma^{2g}(a_1)& \dots & \sigma^{(f-1)g}(a_1) \\
a_2 & \sigma^g(a_2) & \sigma^{2g}(a_2)& \dots & \sigma^{(f-1)g}(a_2) \\
a_3 & \sigma^g(a_3) & \sigma^{2g}(a_3)& \dots & \sigma^{(f-1)g}(a_3) \\
\vdots & \vdots &\vdots &\vdots &\vdots\\
a_{f} & \sigma^g(a_f) & \sigma^{2g}(a_f)& \dots & \sigma^{(f-1)g}(a_f) \\
\end{array}
\right) 
\left(
\begin{array}{c}
x_{0,1}\\
x_{g,1}\\
x_{2g,1}\\
\vdots\\
x_{(f-1)g,1}\\
\end{array}
\right) = 
\left(
\begin{array}{c}
0\\
0\\
0\\
\vdots\\
0\\
\end{array}
\right)
\end{equation}
But the matrix $M$ on the left side of the equation above is nonsingular: this follows from the fact that the product $MM^T$ is the matrix whose $(i,j)$-th entry is the trace (from $\K1bar$ to $\Fbar$) of $a_ia_j$, and since the $a_i$ form a basis of (the separable extension) $\K1bar/\Fbar$,  $det(MM^T) = disc(a_1,\dots, a_f) \neq 0$.  Hence, $x_{0,1}$, $\dots$, $x_{(f-1)g,1}$ must all be zero.

Similarly, we can prove that all remaining variables $x_{i,1}$, $i=0,\dots, n-1$ must be zero, by considering corresponding values of $k$ that are zero in all but one slot.  {Further,} considering the other $g-1$ equations that involve the remaining $x_{i,j}$, {$j=2,\dots, g$, and applying} the same technique, we find that all $x_{i,j}$ must be zero.

{That $\Ic\Lambda$ is the only two-sided ideal that contains $\qf$ follows as in Proposition~\ref{prop:f=n}.}
\end{proof}

\subsection{The case $\ubar = 0$}

We now deal with the case $\ubar = 0_{\Fbar}$, i.e., $u\in \qf$. 
Write $v_i$ for the element $(0,\dots, 0,1,0,\dots,0)$, where the $1$ is in the $i$-th slot.  Also, given a pair $(i,j)$ with $1 \le i \le g$, and $0 \le j < n$, call the set of integer pairs $(p,q)$ such that $i \le p $, and $j+(p-i) \le q < n$, with the identification $(p,q) \sim (p+g, q)$ the \textit{cyclic stairwell} detemined by $(i,j)$. We have the following:
\begin{proposition}\label{prop:second} 
Suppose that $\Ic=\qf$ is a prime in $\Oc_F$ such that $\qf\Oc_K=\Qf_1,\ldots,\Qf_g$ in $K$ and $\ubar = 0$ in $\Fbar$. Then 
\[
\Lambda/\Ic\Lambda  \cong   \oplus_{j=0}^{n-1} (\K1bar\times\cdots \times \Krbar)z^j,
\]
where $z(k_1+\Qf_1, \cdots, k_g + \Qf_g) = \sigma(k_1+\Qf_1, \dots, k_g + \Qf_g)z$, {$z^n = (0+\qf_1, \dots, 0+\qf_g = 0_{\Fbar}$}. Every nonzero ideal of $\Lambda/\Ic\Lambda$ is minimally generated by a set of ``monomials'' $v_{i_1}z^{j_1}, \dots, v_{i_t}z^{j_t} $, $i_1 < i_2 <\cdots < i_t$, with the property that $v_{i_s}z^{j_s}$ does not lie in the cyclic stairwell determined by any of the monomials $v_{i_1}z^{j_1}$, $\dots$, $v_{i_{s-1}}z^{j_{s-1}}$. Thus, the ideals of $\LA$ that contain $\Ic$ are the ideals generated by $\qf$ and monomials $v_{i_1}z^{j_1}, \dots, v_{i_t}z^{j_t} $, $i_1 < i_2 <\cdots < i_t$ with the property described.
\end{proposition}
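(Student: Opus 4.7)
The plan is to exploit the orthogonal idempotent decomposition of $\K1bar\times\cdots\times\Krbar$ and reduce everything to combinatorial data on the grid $\{1,\dots,g\}\times\{0,\dots,n-1\}$. Write $v_1,\dots,v_g$ for the idempotents with $v_pv_{p'}=\delta_{p,p'}v_p$ and $\sum_p v_p=1$. Every element of $\Lambda/\Ic\Lambda$ has a unique expression $\sum_{p,q} c_{p,q}v_pz^q$ with $c_{p,q}\in\overline{K}^{(p)}$, and the multiplication is determined by the rules $v_pv_{p'}=\delta_{p,p'}v_p$, $zv_p=v_{p+1}z$ (indices modulo $g$), and $z^n=0$.

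First I would compute the two-sided ideal generated by a single monomial $v_iz^j$. The identity $z^a\cdot v_iz^j\cdot z^b=v_{i+a\bmod g}\,z^{a+j+b}$ (zero when $a+j+b\geq n$), together with the freedom to scale by elements of $\K1bar\times\cdots\times\Krbar$ on either side, shows that $\langle v_iz^j\rangle=\bigoplus_{(p,q)\in\mathrm{stairwell}(i,j)}\overline{K}^{(p)}v_pz^q$; the positions reachable from $(i,j)$ under combined left/right $z$-shifts are exactly those with $(p-i)\bmod g\leq q-j$, which is precisely the cyclic stairwell condition.

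The technical heart is the converse. For each pair $(p,p')$, sandwiching $x\in I$ gives $v_pxv_{p'}\in I$ with support confined to $\{(p,q):q\equiv p-p'\pmod g\}$, and since $\sum_p v_p=1$ one recovers $x$ as a sum of such pieces. Each piece lies in $V(p,q^*):=\bigoplus_{k=0}^{f-1}\overline{K}^{(p)}v_pz^{q^*+kg}$ (where $q^*=(p-p')\bmod g$), which is a cyclic left module over the smaller generalized cyclic algebra $R_p:=(\overline{K}^{(p)}/\Fbar,\sigma^g,0)\cong\overline{K}^{(p)}[x;\sigma^g]/\langle x^f\rangle$ (with $x$ acting as $z^g$), isomorphic to $R_p$ itself via $v_pz^{q^*}\leftrightarrow 1$. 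A short elimination argument, mirroring the proof of Proposition~\ref{prop:inert_s_one_ubar_zero}, shows the left ideals of $R_p$ are precisely $\langle x^l\rangle$, so the left submodules of $V(p,q^*)$ are the nested chains $\bigoplus_{k\geq l}\overline{K}^{(p)}v_pz^{q^*+kg}$, each generated by a single monomial. Hence $I=\bigoplus_{(p,q)\in T}\overline{K}^{(p)}v_pz^q$ for some $T\subseteq\{1,\dots,g\}\times\{0,\dots,n-1\}$.

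Combining the two directions, $T$ is closed downward under the partial order $(i,j)\preceq(p,q)\iff(p-i)\bmod g\leq q-j$ (transitivity uses subadditivity of $\bmod g$), and the minimal elements of $T$ under $\preceq$ form the required generating set. If two distinct minimal elements shared a first coordinate $i$, the one with the larger $j$-coordinate would lie in the stairwell of the other, contradicting minimality; hence the minimal elements can be listed as $(i_1,j_1),\dots,(i_t,j_t)$ with $i_1<i_2<\cdots<i_t$, and minimality forces each $v_{i_s}z^{j_s}$ to lie outside every earlier stairwell. I anticipate the main obstacle to be the submodule analysis of $V(p,q^*)$: the $z^g$-action is semilinear rather than $\overline{K}^{(p)}$-linear, so verifying that all left submodules arise as $\langle x^l\rangle$ requires the explicit elimination argument, which is not immediate from the two-sided classification of Rowen cited in Proposition~\ref{prop:inert_s_one_ubar_zero}.
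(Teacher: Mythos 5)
Your proof is correct, and it is organized differently from the paper's, although both run on the same engine. The paper argues one element at a time: for $b\in\Jc$ it writes $b=\sum_i v_i b$ and shows directly in $\Lambda/\Ic\Lambda$ that $v_i b$ and the single monomial $v_i z^s$ (with $s$ the lowest $z$-degree occurring in $v_ib$) generate the same two-sided ideal, the tail being absorbed into the unit $1+cz$, invertible because $(cz)^n=\tilde c z^n=0$; monomial generation, the stairwell closure, and finiteness then give the minimal generating sets. You instead Peirce-decompose the whole ideal, $I=\sum_{p,p'}v_pIv_{p'}$, identify each corner $v_p(\Lambda/\Ic\Lambda)v_{p'}$ as a free rank-one left module over the corner ring $R_p\cong\overline{K}^{(p)}[x;\sigma^g]/\langle x^f\rangle$, and classify the left ideals of $R_p$; your ``elimination argument'' is exactly the same lowest-term-plus-unit trick, and you are right that the two-sided classification from Rowen quoted in Proposition~\ref{prop:inert_s_one_ubar_zero} does not directly give it. What your route buys is a sharper conclusion: every ideal is literally $\bigoplus_{(p,q)\in T}\overline{K}^{(p)}v_pz^q$ for a stairwell-closed subset $T$ of the $g\times n$ grid of monomial positions, from which the minimal monomial generators, the ordering $i_1<\cdots<i_t$ (distinct first coordinates), and indeed a complete enumeration of all ideals fall out; what the paper's route buys is brevity, since it never needs the corner-module setup and works with a single generator of lowest degree in each $v_ib$. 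One wording slip: your set $T$ is closed \emph{upward} under your partial order $\preceq$ (consistently with your taking minimal elements afterwards), not downward; this does not affect the argument.
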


(We  would like to thank Ken Goodearl and Kenny Brown for pointing out that \textit{every} ideal in the ring $\Lambda/\Ic\Lambda$ is generated by monomials.)

\begin{proof} That 
\[
\Lambda/\Ic\Lambda  \cong   \oplus_{j=0}^{n-1} (\K1bar\times\cdots \times \Krbar)z^j,
\]
subject to the constraints above follows from (\ref{eq:Lambda}).
Let $\Jc$ be a nonzero ideal of $\Lambda/\Ic\Lambda$.  Pick any nonzero element $b = \sum_{j=0}^{n-1} a_j z^j$ in $\Jc$, where each $a_j$ is of the form $(a_{1,j}, \dots, a_{g,j})$, with the $a_{i,j} \in \Kibar$.   Note that $v_1 + \cdots + v_g = 1$.  It follows that $b = v_1b + \cdots + v_g b$.  Now each $v_ib$ is in the ideal generated by $b$, so the ideal generated by all the $v_ib$ is contained in the ideal generated by $b$.  But coupling this with the relation $b = v_1b + \cdots + v_g b$ we find that the ideal generated by $b$ equals the ideal generated by all the $v_ib$.

We will first show that the ideal generated by each $v_i b$ equals the ideal generated by a single monomial $v_i z^j$ for suitable $j$.  It will follow that the ideal generated by $b$ equals the ideal generated by a  suitable set of monomials, from which we can conclude that $\Jc$ itself is generated by some set of monomials.

We have $v_ib = \sum_{j=0}^{n-1} v_i a_j z^j$.  If all $v_i a_j$ are zero, then $v_i b = 0$, and there is nothing to prove.  So assume that $s$ is least such that $v_i a_s \neq 0$.  Using $v_i^2 a_s= v_ia_s$, we write $v_i b = v_i \left( \alpha z^s + a_{s+1}z^{s+1} + \cdots \right)$, where we have written $\alpha$ for $v_i a_s = (0,\dots, a_{i,s}, \dots, 0)$.  Write $\beta$ for $(0,\dots, a_{i,s}^{-1}, \dots, 0)$.  Multiplying, we find  $\beta v_i b = v_i\left(v_i z^s + a'_{s+1}z^{s+1} + \cdots \right)$ for suitable $a'_{s+1}$, $\dots$.  The first summand on the right is $v_i^2 z_s$ which is just $v_i z^s$.  Thus,  $\beta v_i b = %v_i\left(z^s + a'_{s+1}z^{s+1} + \cdots \right)=
v_iz^s\left(1+ \sigma^{-s}(a'_{s+1})z + \cdots \right) = v_i z^s (1+c z)$ for suitable $c \in \Lambda/\Ic\Lambda$.

Now note that  $z c = c' z$ for suitable $c' \in \Lambda/\Ic\Lambda$. It follows that $(c z)^n = c z c z \cdots c z = \tilde{c}z^n  = 0$. By a standard trick, we find that  $1+c z$ is invertible: its inverse is $1 -cz +(cz)^2 + \cdots +(-1)^{n-1}(cz)^{n-1}$.  Hence, $\beta v_i b = v_i z^s u$ for a unit $u$, from which we find that $v_i z^s = \beta v_i b u^{-1}$ is in the ideal generated by $v_i b$.  Similarly, we find $\alpha \beta v_i b =v_i^2 b = v_i b =\alpha v_i z^s u$, from which we find that $v_ib$ is in the ideal generated by $v_i z^s$.  Thus, the ideal generated by $v_ib$ equals the ideal generated by $v_i z^s$.  As described above, it follows that $\Jc$ is generated by a set of monomials.

Finally, given $v_iz^j\in \Jc$, we know that $v_i z^{j+k}\in \Jc$, $k = 1, 2, \dots$. Similarly, $zv_iz^j = v_{i+1}z^{j+1}\in \Jc$.  (Note that this relation must considered cyclically, that is, $z v_g z^{j} = v_1 z^{j+1} \in \Jc$.) Proceeding thus,  we find that all monomials in the cyclic stairwell determined by $v_iz^j$ are already in the ideal generated by $v_iz^j$ alone. The statement about the minimal generating set for $\Jc$ follows immediately after noting that there are only a finite number of monomials in $\Lambda/\Ic\Lambda$.
\end{proof}

%*************************************************************************************%
%
%
%
%***************************************************************************************%

\section{The {Split} Case: $\Ic=\qf^s$, $g>1$, $e=1$, $f=n/g$}
\label{sec:noram_power_of_a_prime}

We consider in this section the case where $\Ic$ is a \textit{power} of a prime ideal, but where, as in the previous section, the ideal $\qf$ factors as $\qf\Oc_K=\Qf_1\Qf_2\cdots \Qf_g$ in $\MCO_K$ for $g>1$.  Thus, $\Ic = \qf^s$ (for some $s> 1$), and $f = n/g$ for each extension $\Qf_i$. 

We will study in this section the case where $u\not\in \qf$.

Recall that if $\widehat{F}$ is the completion of $F$ at $\qf$, then $K \otimes_F \widehat{F}  \cong \widehat{K^{(1)}} \times \cdots \times \widehat{K^{(g)}}$, where the various $\widehat{K^{(i)}}$ are the completions of $K$ at the extensions $\Qf_i$ of $\qf$. Moreover, $\langle \sigma \rangle$ acts on $\widehat{K^{(1)}} \times \cdots \times \widehat{K^{(g)}}$ (after renumbering the $\widehat{K^{(i)}}$) as follows:  $\sigma$ sends $(0,\dots, 1,\dots, 0)$, where $1$ is in the $i$-th position, to $(0,\dots, 1,\dots, 0)$, where $1$ is in the $i+1$-th position but taken modulo $g$, and $\sigma^g$ acts as a Galois automorphism of $\widehat{K^{(i)}}$  with fixed field precisely $\widehat{F}$.  (This is standard, and is the content of \cite[Chap. III, Theorem 1.2 ]{Jan}, for instance.)

Let $\Vc_{\widehat{K^{(i)}}}$ stand for the valuation ring in $\widehat{K^{(i)}}$ of the $\Qf_i$-adic valuation, and let $\Vc_{\widehat{F}}$ stand for the valuation ring in ${\widehat{F}}$ of the $\qf$-adic valuation.  Consider the ring $\widehat{\Lambda} = \oplus_{j=0}^{n-1} \left(  \Vc_{\widehat{K^{(1)}}} \times \cdots \times \Vc_{\widehat{K^{(g)}}} \right) z^j$, with relations $z(k_1, \dots, k_g) = \sigma(k_1,\dots, k_g)z$ for $(k_1\dots,k_g) \in \Vc_{\widehat{K^{(1)}}} \times \cdots \times \Vc_{\widehat{K^{(g)}}}$, and $z^n = u$, where of course $u\in \Vc_F$ corresponds to the element $(u,\dots, u) \in \Vc_{\widehat{K^{(1)}} }\times \cdots \times \Vc_{\widehat{K^{(g)}}}$.  This is an algebra over the local ring $\Vc_{\widehat{F}}$.  It is easy to see that it is a free module over the ring $\Vc_{\widehat{F}}$, since each $\Vc_{\widehat{K^{(i)}}}$ is  a free module because $\MCO_{\widehat{F}}$ is a p.i.d.  Moreover, continuing to write $\qf$ for $\qf \Vc_{\widehat{F}}$, the quotient $\widehat{\Lambda} /\qf \widehat{\Lambda} $ is precisely $\Mc_n(\Fbar)$ by Proposition \ref{prop:splitting_I_is_just_q}.  (This follows by taking $s=1$ in the isomorphisms $\MCO_F/\qf^s \MCO_F \cong \MCO_{\widehat{F}}/\qf^s \MCO_{\widehat{F}}$ for $s = 1, 2, \dots$.)  Thus, $\widehat{\Lambda}$ is an Azumaya algebra over $\MCO_{\widehat{F}}$, and as previously argued in the proof of Proposition \ref{prop:inert_power_of_prime_u_not_in_q}, this forces $\widehat{\Lambda} \cong \Mc_n(\MCO_{\widehat{F}})$.  This leads to the following:

\begin{proposition} \label{prop:power-of-prime-unramified}
In the situation described above ($\Ic = \qf^s$, $s > 1$, $\qf\Oc_K = \Qf_1\Qf_2\cdots\Qf_g$, $u \not\in\qf$), $\Lambda/\Ic\Lambda \cong \Mc_n(\Oc_F/\qf^s)$. The two-sided proper ideals $\Jc$ of $\Lambda$ such that $\Jc$ contains $\qf^s \Lambda$ are the ideals $\Jc_t = \qf^t \Lambda = \oplus_{j=0}^{n-1} \qf^t\Oc_K z^j$, for $1 \le t \le s$, and their quotients satisfy $\Lambda/\Jc_t \cong \Mc_n(\Oc_F/\qf^t)$ . 

\end{proposition}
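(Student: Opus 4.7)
The plan is to mirror the argument of Proposition~\ref{prop:inert_power_of_prime_u_not_in_q}, with Proposition~\ref{prop:splitting_I_is_just_q} (the split case, $s=1$) replacing Proposition~\ref{prop:f=n} (the inert case, $s=1$) as the key input. Essentially all of the setup has already been done in the paragraph preceding the statement: it remains to justify that $\widehat{\Lambda}$ is an Azumaya algebra over the DVR $\Vc_{\widehat{F}}$, and then to push the resulting matrix-ring structure back down to $\Lambda/\qf^s\Lambda$.

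First I would confirm that $\widehat{\Lambda}$ is free of rank $n^2$ over the local ring $\Vc_{\widehat{F}}$: each $\Vc_{\widehat{K^{(i)}}}$ is free over $\Vc_{\widehat{F}}$ of rank $f$ (since $\Vc_{\widehat{F}}$ is a PID and the extensions are unramified of residue degree $f$), so $\Vc_{\widehat{K^{(1)}}}\times\cdots\times\Vc_{\widehat{K^{(g)}}}$ is free of rank $gf = n$, and multiplying by the $n$ powers of $z$ gives rank $n^2$. Since $u\notin\qf$, $u$ is a unit in $\Vc_{\widehat{F}}$, and $\Vc_{\widehat{F}}$ is central in $\widehat{\Lambda}$ by construction. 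Reducing modulo the maximal ideal $\qf\Vc_{\widehat{F}}$ and applying Proposition~\ref{prop:splitting_I_is_just_q} (using $\Oc_F/\qf \cong \Vc_{\widehat{F}}/\qf\Vc_{\widehat{F}}$ and $\Oc_K/\qf\Oc_K \cong \prod_i \Vc_{\widehat{K^{(i)}}}/\Qf_i\Vc_{\widehat{K^{(i)}}}$), we find $\widehat{\Lambda}/\qf\widehat{\Lambda}\cong \Mc_n(\Fbar)$, which is a central simple $\Fbar$-algebra of the right dimension. This is exactly the Azumaya criterion (as in \cite[Example 2.4 (i) and Theorem 2.8]{JW}), so $\widehat{\Lambda}$ is Azumaya over $\Vc_{\widehat{F}}$. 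Since the Brauer group of the finite field $\Fbar$ is trivial and $\Vc_{\widehat{F}}$ is Henselian, the Brauer group of $\Vc_{\widehat{F}}$ is trivial as well, and hence $\widehat{\Lambda}\cong \mathrm{End}_{\Vc_{\widehat{F}}}(P)$ for some finitely generated projective, hence free (local ring), module $P$ of rank $n$. Thus $\widehat{\Lambda}\cong \Mc_n(\Vc_{\widehat{F}})$.

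Reducing modulo $\qf^s$ and invoking the standard isomorphisms $\Oc_F/\qf^s \cong \Vc_F/\qf^s \cong \Vc_{\widehat{F}}/\qf^s$ and $\Oc_K/\Qf_i^s \cong \Vc_{\widehat{K^{(i)}}}/\Qf_i^s$ (cf.\ \cite[Chapter I, Lemma 3.1]{Jan}), we see that $\widehat{\Lambda}/\qf^s\widehat{\Lambda}$ agrees with $\Lambda/\qf^s\Lambda$, and the former is $\Mc_n(\Vc_{\widehat{F}}/\qf^s) \cong \Mc_n(\Oc_F/\qf^s)$. This establishes the first assertion.

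For the ideal classification, I would use that the two-sided ideals of $\Mc_n(R)$ over a commutative ring $R$ are precisely the matrix ideals $\Mc_n(I)$ as $I$ ranges over ideals of $R$, together with the fact that the proper ideals of the local principal ring $\Oc_F/\qf^s$ are exactly $\qf^t/\qf^s$ for $1\le t\le s$. The standard ideal-correspondence theorem for $\Lambda \twoheadrightarrow \Lambda/\qf^s\Lambda$ then identifies the two-sided proper ideals of $\Lambda$ containing $\qf^s\Lambda$ with the ideals $\qf^t\Lambda/\qf^s\Lambda$, and the quotient description $\Lambda/\Jc_t\cong \Mc_n(\Oc_F/\qf^t)$ follows by applying the same chain of isomorphisms with $s$ replaced by $t$. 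The main obstacle is really just the Azumaya verification in the first step; once that is in hand, everything else is bookkeeping with completions and the matrix-ring ideal theorem, exactly as in Proposition~\ref{prop:inert_power_of_prime_u_not_in_q}.
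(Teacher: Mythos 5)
Your proposal is correct and follows essentially the same route as the paper: the completion $\widehat{\Lambda}$ is shown to be Azumaya over $\Vc_{\widehat{F}}$ via its reduction $\Mc_n(\Fbar)$ from Proposition~\ref{prop:splitting_I_is_just_q}, trivial Brauer group gives $\widehat{\Lambda}\cong\Mc_n(\Vc_{\widehat{F}})$, and then the identifications $\Lambda/\qf^t\Lambda\cong\widehat{\Lambda}/\qf^t\widehat{\Lambda}$ together with the ideal correspondence finish the argument exactly as in Proposition~\ref{prop:inert_power_of_prime_u_not_in_q}.
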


\begin{proof} We observe that for any $t \ge 1$, $\MCO_K/\qf^t \MCO_K \cong  \MCO_{K}/\Qf_1^t \MCO_{K} \times \cdots \times \MCO_{K}/\Qf_g^t \MCO_{K} \cong \Vc_{\widehat{K^{(1)}}}/\Qf^t \Vc_{\widehat{K^{(1)}}} \times \cdots \times \Vc_{\widehat{K^{(g)}}}/\Qf^t \Vc_{\widehat{K^{(g)}}}$.   It follows from this that
 $\Lambda/\qf^t\Lambda \cong \widehat{\Lambda}/\qf^t\widehat{\Lambda} $ for $t\ge 1$.  Since $\widehat{\Lambda} \cong \Mc_n(\MCO_{\widehat{F}})$, we find $\Lambda/\qf^s\Lambda \cong \Mc_n(\MCO_F/\qf^s)$, just as in
the proof of Proposition \ref{prop:inert_power_of_prime_u_not_in_q}.  The remaining statements about the ideals are proved just as in Proposition   \ref{prop:inert_power_of_prime_u_not_in_q}.

\end{proof}

%***************************************************************************%
%
% WIRETAP
%
%***************************************************************************%
\section{Further Coding Remarks and Applications}
\label{sec:wiretap}

Quotients $\Lambda/\Jc$ of the natural order $\Lambda$ of a cyclic division algebra $\Dc$ by a two-sided ideal $\Jc$ appear in the context of space-time coded modulation. From (\ref{eq:dhbound}), the main design criterion for a space-time code $\Cc$ is to maximize the minimum determinant $\Delta_{min}$ which is lower bounded by
\[
\Delta_{min} \geq \min\left( d_H(\bar{\Cc})^2 \min_{0\neq x_i} |\det(X_i)|^2, \min_{0\neq x_i \in \Jc} |\det(X_i)|^2\right).
\]
There are three factors that influence this lower bound:
\begin{enumerate}
\item
$\min_{0\neq x_i} |\det(X_i)|^2$ which depends on the inner space-time code designed on the cyclic division algebra $\Dc$,
\item
the Hamming distance $d_H(\bar{\Cc})$, 
\item
and the index of $\Jc$ in $\Lambda$.
\end{enumerate}
The different quotient rings seen in the previous sections illustrate well some of the trade-offs of this code design: if $\Lambda/\Jc\cong \Mc_n(\bar{F})$, the lower bound on $\Delta_{min}$ is good, and the coding problem is that of finding codes with suitable Hamming distance over $\Mc_n(\bar{F})$. If $\Lambda/\Jc\cong \Mc_n(\Oc_K/\qf^s)$, the lower bound is much better, but the matrix ring considered becomes more complex. Important from a coding point of view is the number of symbols encoded. We observed that when $\Lambda/\Jc \cong \bar{K}$, the lower bound is smaller, but more symbols are encoded.

Coset encoding in general is used in a variety of coding scenarios. We will next briefly mention its application in the context of wiretap codes. Recall that a wiretap 
channel \cite{Wyner} is a broadcast channel where the transmitter, Alice, sends messages 
to Bob, a legitimate receiver, and Eve, an eavesdropper. A wiretap code should ensure 
that transmission between Alice and Bob is reliable, and the amount of information that 
Eve receives is negligeable. Encoding a wiretap code involves including bits of randomness 
to increase Eve's confusion, and this is done via coset encoding with $G=\FF_2^n$ and $H=C$ an $(n,n-k)$ code. There 
are then $2^k$ cosets of $C$, $k$ bits of information to label them, and the $n-k$ uncoded 
bits are random bits, used to pick one codeword at random with the chosen coset. The same 
holds for wiretap lattice codes, as proposed in \cite{BOjournal}, for MIMO (multiple input multiple output) channels , where the message 
transmitted by Alice is now an $n\times n$ space-time code, a matrix with complex coefficients, which can be seen as a lattice $\Lambda_b$ in $\mathbb{C}^{n^2}$, with $\Lambda_e$ a sublattice of $\Lambda_b$.
Suppose that $|\Lambda_b/\Lambda_e|=2^k$, then a message of $k$ bits will label the different cosets of $\Lambda_e$, while random bits are used to pick a point at random in the coset, that will be the actual transmitted point.
This again addresses the problem of coset encoding in cyclic division algebras, using a different code design from that studied in this paper.

As described in \S \ref{subsecn:contrib}, the results of this paper provide a framework for coset coding in division algebras by identifying a large class of quotient rings of the natural order $\LA$. To take full advantage of this framework, we need to study codes over the various quotient rings that we have identified, and determine best ways to lift them to a code over $\LA$.  This will be the focus of future work.

%***************************************************************************%
%
% BIBLIO
%
%***************************************************************************%

\end{document}